\newcolumntype{\$}{>{\global\let\currentrowstyle\relax}}
\newcolumntype{^}{>{\currentrowstyle}}
\newcolumntype{?}{!{\vrule width 1pt}}
\tikzset{>=latex}
\definecolor{barBlack}{HTML}{000000}
\definecolor{barGrey}{HTML}{808080}
\newcommand{\Z}{\mathbb{Z}}
\newcommand{\R}{\mathbb{R}}
\newcommand{\known}[1]{\mbox{$\sigma(#1)$}}
\newcommand{\unknown}[1]{\mbox{$\overline{\sigma}(#1)$}}
\newcommand{\coeffsubtensorScaled}{S}
\newcommand{\coeffsubtensorAdd}{T}
\newcommand{\valpha}{\vec{\alpha}}
\newcommand{\coefsubtensor}{\coeffsubtensorScaled_{k,i}}
\newcommand{\matrixsubtensor}{\coeffsubtensorScaled_k}
\newcommand{\MCA}{\mbox{MCA}}
\newcommand{\CSA}{\mbox{CSA}}
\newtheorem{theorem}{Theorem}[section]
\newtheorem{corollary}{Corollary}[theorem]
\newtheorem{proof}{Proof}[theorem]
\newtheorem{definition}{Definition}[section]
\newcommand{\SCCA}{\mbox{SCCA}}
\begin{document}

\title{\bf An Admissible Shift-Consistent Method for Recommender Systems}
\author{{\bf Tung D.\ Nguyen} and {\bf Jeffrey Uhlmann}\vspace{4pt} \\
Dept.\ of Electrical Engineering and Computer Science\\
University of Missouri - Columbia}
\date{}
\maketitle

\begin{abstract}
ABSTRACT: In this paper, we propose a new constraint, called shift-consistency, for solving matrix/tensor completion problems in the context of recommender systems. Our method provably guarantees several key mathematical properties: (1) satisfies a recently established admissibility criterion for recommender systems; (2) satisfies a definition of fairness that eliminates a specific class of potential opportunities for users to maliciously influence system recommendations; and (3) offers robustness by exploiting provable uniqueness of missing-value imputation. We provide a rigorous mathematical description of the method, including its generalization from matrix to tensor form to permit representation and exploitation of complex structural relationships among sets of user and product attributes. We argue that our analysis suggests a structured means for defining latent-space projections that can permit provable performance properties to be established for machine learning methods.
\end{abstract}

\begin{footnotesize}
\begin{quote}
{\bf Keywords}: Recommender System, Shift Consistency, Unit Consistency, Data Mining, Information Retrieval, Missing-value Imputation, Fairness, Inclusivity.
\end{quote}
\end{footnotesize}

\maketitle
\section{Introduction}
Previous matrix/tensor completion methods have been defined as an optimization problem involving minimization of a particular norm of the completed matrix or tensor \cite{Recht, l2norm1, l2norm2, l2norm3, l2norm4, l2norm5, lineartime1, cv-kajo1, cv-kajo2, cv-SFO2}. One variant uses rank minimization to produce a decomposable (factorizable) result that can simplify problem complexity for subsequent operations \cite{szl, hardt, cv-background1, cv-object, cv-visual, tnnr1, tnnr2, cv-structure, low-tt-rankness}. In most cases, these generic approaches that are not defined to enforce particular application-specific properties, e.g., for recommender systems.

In our previous work \cite{acmrs}, a basic admissibility criterion was defined for recommender systems that demands that recommendation rankings be consistent with user rankings whenever there is unanimity with respect to a subset of products. For example, if all users have rated products A and B, and every user has rated/preferred A over B, then the recommender system (RS) should provably never recommend product B over product A to a new user\footnote{The actual consensus-order criterion from \cite{acmrs} is somewhat more general in that it applies also to recommendations to existing users when there are no users who have not rated both A and B, i.e., every user has rated either both or neither of the two products.}. The motivation for establishing such a simple and intuitive criterion is to provide a means for dismissing black-box systems that inherently cannot be trusted to operate fairly. Specifically, if such a system does not satisfy this basic criterion and may recommend B over A, then clearly it is operating in a way that does not reflect user sentiments. And if its recommendations do not reflect user sentiments, then what do they reflect? If, for example, it reflects a promotional payment by the producer of product B, then that fact may be of interest to users and regulators.      

The necessity for a rigorous RS admissibility criterion is to impose limits on the extent to which a system can be gamed/manipulated. The consensus-order criterion of \cite{acmrs} achieves this in the sense that such manipulations cannot lead to a given product being leapfrogged in recommendation rank over another product that is clearly preferred by the user base. We note, however, that it is also possible to game {\em the admissibility criterion itself}. Specifically, a system may include a special-case mechanism that enforces the consensus-order criterion -- {\em but only in those cases for which its conditions hold}. If this can be achieved, then the system would be able to narrowly satisfy the admissibility criterion in a purely technical sense but still retain wide latitude in all other cases to generate recommendations that would violate almost all reasonable notions of fairness. This motivates us to suggest that the criterion be augmented to require that it be satisfied generically in the sense that the same recommendation process is applied both when the admissibility conditions hold and when they do not\footnote{We also propose consideration of a more relaxed ``full-disclosure'' admissibility criterion that guarantees consensus-order except for recommendations that are given with an associated disclaimer stating the reason it violates the criterion, e.g., that it is a paid advertisement.}.  

In \cite{acmrs}, we developed a unit-consistent (UC) RS framework that provably satisfies the consensus-order admissibility criterion. It achieves this indirectly by guaranteeing that the rank ordering of recommendations is invariant to the scaling of a users set of ratings. While this framework was the first to be proven to formally satisfy the RS admissibility criterion, this does not in any way imply it is the best possible admissible approach for all RS applications, we discussed its potential sensitivity to discretization artifacts that are introduced when user ratings are constrained to integer values in a small range, e.g., 1 to 5 for the MovieLens datasets \cite{Movielens100k+1M}. In this paper, we develop a different RS framework that is provably admissible but is based on a different consistency constraint: shift consistency\footnote{Throughout this paper, we will attempt to make our shift-consistent results and analysis follow the unit-consistent results of \cite{acmrs} as closely as possible to permit direct comparison by the reader.}. Although it should be expected that all admissible systems will exhibit similar near-optimal performance simply because they each satisfy properties that are fundamental to the problem domain, we provide empirical evidence that our SC framework may offer better resilience to discretization artifacts in range-restricted RS applications.  

The structure of the paper is as follows: We begin with a brief background on recommender systems and summarize the prior-art UC framework. We show that this framework can be adapted to enforce a different consistency property, namely, shift consistency (SC). We then formally prove that the resulting SC recommender-system framework satisfies the required order-consensus admissibility criterion. Empirical results are then provided to compare performance of the UC and SC methods on the strongly range-restricted MovieLens dataset. We conclude with a discussion of how the different consistency conditions underpinning the two approaches are suggestive that similar conditions potentially may be imposed on machine learning/AI systems that otherwise would not be amenable to the proving of desired fairness and/or robustness properties.

\section{Background}

Recommender systems are essentially matrix/tensor completion algorithms that impute values for absent entries in a given table/matrix or tensor. In the RS context, filled entries are interpreted as user ratings of products, and each absent/unfilled entry represents a product that has not been rated by a particular user. For example, given a matrix $A$ with rows containing user ratings of products, then the rating of product $j$ by user $i$ can be notated as $A_{ij} = A(i,j)$ for $(i,j) \in \known{A}$. We define the list of known entries in $\known{A}$ as $A_{r}$, and the list of absent entries in $\unknown{A}$ as $A_{nr}$. The recommender system is then expected to impute values for entries in $A_{nr}$ based on the given entries in $A_r$. 

The unit-consistent RS framework of \cite{acmrs} produces a unique completion of a given table in a way that provides consistency with respect to positive scalings of the rows and/or columns. For example, assume that the value imputed for entry $(i,j)$ is $x$. If the same UC process is applied after row $i$ is scaled by positive value $s_i$, and column $j$ is scaled by positive value $s_j$, then the imputed value for entry $(i,j)$ will be $s_2s_2 x$, i.e., it is the same result but now given in the new scaled units of row $i$ and column $j$. The motivation for unit consistency is the assumption that each user implicitly applies their own personal unit(s) of measure when rating different products. Therefore, RS-predicted ratings for a particular user should be given in that user's individual units of preference. 

To get an intuitive feel for what the UC constraint enforces, consider a highly specialized example involving users Alice and Bob in which each of Alice's ratings is exactly 10\% higher than that of Bob for films they have both rated, and they have both rated exactly the same set of films. Under these conditions, the UC predicted rating (recommendation) for a particular new film to Alice would be 10\% higher than the predicted rating of that same film for Bob. In other words, the respective ratings given to Alice and Bob are consistent with the implicit units of measure each has applied in their prior ratings of films. Another way to interpret the situation is that the relative film preferences of Alice and Bob are identical in the sense that if Bob prefers film $X$ over film $Y$, then Alice has the same preference even though she consistently gives ratings that are 10\% higher than those of Bob. 

Remarkably, it has been shown that unit consistency alone is sufficient to guarantee a unique solution that can be evaluated with optimal computational efficiency \cite{acmrs}. It also permits other properties to be identified and established, one of which is a notion of {\em fairness} in the sense that a user cannot scale her set of ratings in a way that increases the influence of her relative preferences on recommendations to other users. This is because UC recommendations to other users are invariant with respect to the scaling of any user's set of ratings.

Arguably the most important property deriving from UC is consensus ordering, which has been given as an admissibility criterion for any proposed RS framework. However, unit consistency is not necessary for admissibility because other consistency conditions can also form the basis of an admissible RS. In the next section, we introduce shift consistency (SC) as such a basis.

\section{SC Completion Framework}

Ultimately, the choice to use unit consistency as a basis for RS predictions may seem to come down to a single question. In the case of the earlier example, where for any film $X$ rated by both Alice and Bob it is the case that Alice's rating is 10\% higher than Bob's rating, so if Bob rates a new film $Y$ as 70/100, should the RS be expected to predict that Alice is likely to give it a rating of 77/100, i.e., 10\% higher than Bob's rating? This is certainly not an {\em un}reasonable expectation, but it is not the only possible {\em reasonable} expectation.

Consider a variant of the running example in which the {\em difference} between Alice's rating and Bob's rating for each film is exactly 5, e.g., if Alice rates a particular film as 85/100 then Bob's rating is 80/100. Thus, instead of a multiplicative/scale relationship between their respective sets of ratings, there is now a fixed additive relationship in the sense that each of Bob's ratings is a shift/translation of Alice's corresponding rating. Under these conditions, if Bob rates a new film $Y$ as 70/100, should the RS be expected to predict that Alice is likely to give $Y$ a rating of 75/100? This expectation is no less reasonable than what was expected in the UC case. 

Our motivation for examining shift consistency is not based on whether or not it is more reasonable in some sense than unit consistency. That is a question that demands an application-specific answer. Our motivation is based on an observation in \cite{acmrs} that the multiplicative/scale properties of UC may exhibit undesired sensitivity to discretization when ratings are confined to a small set of integer values. This can be understood by considering the extent to which percentage relationships among ratings can be accurately discerned when ratings are restricted to, e.g., values between 1 and 10, as opposed to consistent additive differences. If it is assumed that all admissible RS systems are likely to exhibit similar performance due to the fact that they all enforce reasonable RS-related properties, then considerations such as sensitivity to discretization may represent a deciding factor when choosing among otherwise equally-reasonable candidates. 

\subsection{Definitions}
\label{CSA_section}

 Our Shift-Consistent Completion Algorithm (SCCA) is based on a canonical shifting algorithm (CSA), and its derivation closely follows that for UC in \cite{acmrs}. Given a $d$-dimensional tensor $A$, where  $d=2$ corresponds to a matrix (i.e., the conventional RS tabular formulation), we begin with the following definitions, which also follow those in \cite{acmrs}:

\begin{enumerate}

\item $A$ is a positive $d$-dimensional tensor with fixed dimensional extents $n_1, ..., n_d$. (The restriction to strictly positive, as opposed to nonnegative, entries is unnecessary for canonical shifting but will prove convenient for maintaining consistency with the common RS convention of reserving the value of $0$ to represent absent entries, and also for simplifying notation by avoiding the need to define a special symbol to designate absent entries.) 

\item $\Vec{\alpha} = \{\alpha_1, ...\,, \alpha_d\} \in \Z^d$ is a $d$-dimensional vector that specifies an entry of $A$ as $A(\Vec{\alpha})$.

\item $A_i$ is the $i^{th}$ element of the ordered set of all $k$-dimensional subtensors of $A$, where each subtensor is equal to $A$ but with a distinct subset of the $d-k$ extents restricted to $1$. If vector $\vec{\alpha}$ satisfies $A(\vec{\alpha}) \in A_i$, we write $\vec{\alpha} \in A_i$.

\item $\coeffsubtensorScaled_k$ is a vector of length equal to the number of $k$-dimensional subtensors of $A$, with $S_{k, i}$ denoting the $i^{th}$ element.

\item $\known{A}$ is the set of known entries of tensor $A$, and $\unknown{A}$ is its complement, the set of unknown entries.

\item Integer $k<d$ denotes the dimensionality of a given subtensor.

\item $[m] = \{1, 2, ...\,, m\}$ is an index set of the first $m$ natural numbers.

\end{enumerate}

\noindent The following is the first of two key definitions.

\begin{definition} - $\textup{\bf Sets of subtensors:}$ Let $V_i = \mathbb{R}^{n_i}$ and assume a $d$-dimensional tensor $A \in V_1 \otimes V_2 \otimes \cdots \otimes V_d$ and a positive integer $1 \leq k < d$. For a permutation vector $\pi = \{\pi_1,\cdots,\pi_k\} \subset [d]$ of cardinality $k$, a flattening $(\pi, \pi^c)$ of $A$ is the matrix $A_{(\pi^c, \pi)}$ defined as:
\begin{equation}
    A_{(\pi^c, \pi)} \in \left(V_{\pi_1^c} \otimes \cdots \otimes V_{\pi^c_{d-k}}\right) \otimes \left(V_{\pi_1} \otimes \cdots \otimes V_{\pi_k} \right)^{*} 
    \label{flat}
\end{equation}
where $V_{\pi_i}^{*}$ denotes the dual space of $V_{\pi_i}$. Observe that the rows of $A_{(\pi^c, \pi)}$ matrix contain all the k-dimensional subtensors corresponding to the permutation $\pi$. The set of all subtensors is denoted as $\mathcal{A}$ (with finite cardinality $|\mathcal{A}|$) is:
\begin{equation}
    \mathcal{A} = \bigcup_{\pi \subset [d]} \{A_{(\pi^c, \pi)}[r, :], 1 \leq r \leq size( A_{(\pi^c, \pi)}, 1 ) \}
\end{equation}
where $size(T, n)$ is the dimensional length of a matrix $T$ at dimension $n$. For notation convenience, a subtensor in $\mathcal{A}$ is denoted as $A_i$ for $1\leq i \leq |\mathcal{A}|$.
\label{subtensors_def}
\end{definition}
The flattening space in equation (\ref{flat}) puts the indices corresponding to the dimensions specified by $\pi$ first, and the indices corresponding to the dimensions specified by $\pi^c$ last. The rows of $A_{(\pi^c, \pi)}$ correspond to the k-dimensional subtensors of $A$, where the first $k$ indices match the dimensions specified by $\pi$, and the last ($d$-$k$) indices match the dimensions specified by $\pi^c$. The following is our second key definition.
\begin{definition} \label{scaleDef} - $\textup{\bf Shifting $k$-dimensional subtensors of a tensor:}$ 
For $d$-dimensional tensor $A$, let $S_{\pi}$ be the diagonal matrix that scales all $k$-dimensional subtensors by permutation vector $\pi\subset [d]$ with cardinality $k$ and let $S_k \coloneqq (S_{\pi})_{\pi \subset [d]}$. We define the $\pi$-shift as
\begin{equation}
   S_{\pi}\ominus_{\pi}A = A_{(\pi^c, \pi)} - S_{\pi} \otimes \mathbf{1}_{size(A_{(\pi^c, \pi)}, 1)} \quad \forall \pi \subset [d] \,.
\label{scale-by-all}
\end{equation}
where $\mathbf{1}_{n}$ is a column vector of ones of size $n$. Then the shifting operation $A'=\coeffsubtensorScaled_k \ominus_k A$ is defined as
\begin{equation}
    (S_{\pi_1}, \cdots, S_{\pi_k})\ominus_k A \coloneqq S_{\pi_1} \ominus_{\pi_1} (\cdots (S_{\pi_{k-1}} \ominus_{\pi_{k-1}} (S_{\pi_l} \ominus_{\pi_l} A))\cdots)
\end{equation}
Equivalently, for each $\vec{\alpha} \in \upsigma(A)$ 
\begin{equation}
    A'(\Vec{\alpha}) ~~ \equiv   ~~
A(\Vec{\alpha}) ~-\sum\limits_{i \text{:} \Vec{\alpha} \in A_i}\hspace{-2pt} \coefsubtensor ~.
\label{simple-alpha-shifting}
\end{equation}

\label{shiftingbyHadamard}
\end{definition}
Formula (\ref{scale-by-all}) shifts each $A_i \in \mathcal{A}$ by a coefficient from matrix $S_{\pi}$, with each coefficient $S_{k,i}$ shifting the corresponding subtensor $A_i$. This yields tensor $A'$, with a simpler equivalent formula (\ref{simple-alpha-shifting}) used for subsequent analyses. The structured shifting of Definition \ref{scaleDef} is a fundamental component of the solution for the {\em canonical shifting problem}, which transforms a nonnegative tensor to a unique shift-invariant canonical form.

\begin{algorithm}
\caption{Shift-Consistent Completion Algorithm (SCCA)}\label{alg:scca}
\begin{algorithmic}[1]
\State Input: $A$, $k$
\State Output: $A'$

\Function{SCCA}{$A, k$}
\State $S_k \gets$ \Call{CSA}{$A, k$} \Comment{Step 1: CSA process}
\State $A' \gets A$ \Comment{Step 2: Completion process}
\For{$\Vec{\alpha} \in \unknown{A}$}
\State $A'(\Vec{\alpha}) \gets \sum_{i: \vec{\alpha} \in A_i} S_{k,i}$
\EndFor
\EndFunction
\end{algorithmic}
\end{algorithm}

\begin{algorithm}[H]
\caption{Canonical Shifting Algorithm (CSA)}\label{alg:csa}
\begin{algorithmic}[1]
\State Input: $A$, $k$
\State Output: $A'$, $S_k$
\Function{CSA}{$A, k$}
\State Initialize threshold $\epsilon \in (0,1)$, $A'$ = $A$, variance $v\leftarrow 0$, and $p$ and $S_k$ as a zero vector of conformant length.
\While{$v < \epsilon$}
\For{each subtensor $A_i$ with index $i$}
\State $\rho_i = -\left[ |\upsigma(A_i)|\right]^{-1}
\sum\limits_{\Vec{\alpha} \in \upsigma(A_i)}A'(\Vec{\alpha})$

\State $A'(\Vec{\alpha}) \leftarrow A'(\Vec{\alpha}) + \rho_{i}$, \quad $v\leftarrow v+\rho_i^2$ \quad for $\Vec{\alpha} \in \upsigma(A_i)$
\State $\coefsubtensor \leftarrow \coefsubtensor + \rho_{i}$
\EndFor
\EndWhile
\EndFunction
\end{algorithmic}
\end{algorithm}

The time complexity of this algorithm is $O\left(|\upsigma(A)| \right)$ per iteration, assuming that $d=O(1)$, which is all but necessary in any realistic practical application \cite{CSA}. Using definition \ref{shiftingbyHadamard}, the CSA algorithm is formulated by the CSP formulation for arbitrary shifting of $k$-dimensional subtensors of a given nonnegative tensor $A$.
\\\\
\textbf{Shift-Consistent Problem (CSP):} Find the $d$-dimensional tensor $A'$ and a vector $\matrixsubtensor$ such that $
A'=\coeffsubtensorScaled_k \ominus_k A$,
and each $i^{th}$-index $k$-dimensional subtensor $A'_i$ with the sum of its known entries equal to zero.
\begin{equation}
\sum_{\Vec{\alpha} \in \upsigma(A_i)} A'(\Vec{\alpha}) = 0 \quad \forall A_i \in \mathcal{A}
\end{equation}

Here we state the uniqueness solution.

\begin{theorem}{(Uniqueness of A')} There exists at most one tensor $A'$ for which there exists a vector $\matrixsubtensor$ such that the solution $\left(A',\matrixsubtensor \right)$ satisfies CSP. Furthermore, if a vector $\coeffsubtensorAdd_k$ satisfies
\begin{equation}
    \sum\limits_{i: \Vec{\alpha} \in A_i}\hspace{-4pt}  \coeffsubtensorAdd_{k, i}  ~= ~0 \qquad \forall \Vec{\alpha}  \in \upsigma(A),
\end{equation} then $\left(A', \coeffsubtensorScaled_k + \coeffsubtensorAdd_k\right)$ is also a solution.
\label{uniquess_A'}
\end{theorem}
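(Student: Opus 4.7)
Both halves of the theorem ultimately reduce to pushing around the pointwise identity
\[
A'(\vec{\alpha}) \;=\; A(\vec{\alpha}) \;-\!\! \sum_{i:\vec{\alpha}\in A_i}\!\! S_{k,i}, \qquad \vec{\alpha}\in \upsigma(A),
\]
given by \eqref{simple-alpha-shifting}, together with the CSP constraint $\sum_{\vec{\alpha}\in\upsigma(A_i)} A'(\vec{\alpha})=0$. The plan is to handle the ``second solution'' claim first as a direct substitution, and then tackle uniqueness by forming the difference of two hypothetical solutions and showing that its squared $\ell_2$-norm on $\upsigma(A)$ collapses to $0$ via a swap-of-summation trick.

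\textbf{Second statement (existence of a second solution).} Assume $(A',S_k)$ satisfies CSP, and let $T_k$ satisfy $\sum_{i:\vec{\alpha}\in A_i} T_{k,i}=0$ for every $\vec{\alpha}\in\upsigma(A)$. Substituting $S_k+T_k$ into \eqref{simple-alpha-shifting}, the added $T_k$ contribution cancels entrywise, so $(S_k+T_k)\ominus_k A$ produces exactly the same tensor $A'$. Because $A'$ is unchanged, the zero-subtensor-sum constraint still holds, so $(A',S_k+T_k)$ is a solution. This is a one-line verification and poses no real difficulty.

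\textbf{Uniqueness of $A'$.} Suppose $(A',S_k)$ and $(A'',S_k')$ both satisfy CSP. Set $D_{k,i}:=S_{k,i}'-S_{k,i}$ and $B(\vec{\alpha}):=A'(\vec{\alpha})-A''(\vec{\alpha})$. Subtracting the two instances of \eqref{simple-alpha-shifting} yields
\[
B(\vec{\alpha}) \;=\; \sum_{i:\vec{\alpha}\in A_i}\! D_{k,i}, \qquad \vec{\alpha}\in \upsigma(A),
\]
and, because both $A'$ and $A''$ satisfy the CSP zero-sum constraint, so does their difference: $\sum_{\vec{\alpha}\in\upsigma(A_i)} B(\vec{\alpha})=0$ for every subtensor index $i$. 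Now I compute
\[
\sum_{\vec{\alpha}\in\upsigma(A)} B(\vec{\alpha})^2
\;=\; \sum_{\vec{\alpha}\in\upsigma(A)} B(\vec{\alpha})\!\!\sum_{i:\vec{\alpha}\in A_i}\! D_{k,i}
\;=\; \sum_i D_{k,i} \!\!\sum_{\vec{\alpha}\in\upsigma(A_i)}\! B(\vec{\alpha})
\;=\; 0,
\]
where the key step is interchanging the order of summation so that each $D_{k,i}$ is multiplied by a subtensor sum of $B$, which vanishes by the CSP constraint on $B$. Hence $B\equiv 0$ on $\upsigma(A)$, i.e., $A'=A''$, proving uniqueness.

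\textbf{Main obstacle.} The substitution argument for the second statement is essentially mechanical; the only genuinely nontrivial step is recognizing that uniqueness of $A'$ — but \emph{not} of the shift vector $S_k$ — is exactly what the swap-of-summation identity above captures. The subtle point is that $D_{k,i}$ need not be zero (which is precisely what the second half of the theorem asserts), so one must avoid trying to prove $S_k=S_k'$; instead one proves that the combination $\sum_{i:\vec{\alpha}\in A_i} D_{k,i}$ is zero on $\upsigma(A)$, which is a strictly weaker conclusion and is exactly sufficient to give $A'=A''$. I expect this to be the only real conceptual step; the rest is bookkeeping on indices.
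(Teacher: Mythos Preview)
Your proof is correct. Note, however, that the paper does not actually supply its own proof of this theorem: immediately after the statement it simply says that the uniqueness and existence for CSP ``can be derived using \cite{acmrs},'' i.e., it defers to the analogous multiplicative (unit-consistent) result in the companion paper rather than arguing it in the shift setting. So there is no in-paper proof to compare against step by step.

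That said, your argument is a clean, self-contained route that the paper does not spell out. The second claim is indeed a one-line substitution. For uniqueness, your $\ell_2$-energy trick---writing $\sum_{\vec{\alpha}\in\upsigma(A)} B(\vec{\alpha})^2$, replacing one factor of $B$ by $\sum_{i:\vec{\alpha}\in A_i} D_{k,i}$, and swapping the order of summation so each $D_{k,i}$ multiplies the subtensor sum $\sum_{\vec{\alpha}\in\upsigma(A_i)} B(\vec{\alpha})=0$---is exactly the right idea and uses nothing beyond the two defining relations of CSP. It is also worth noting that your observation that one should only aim for $\sum_{i:\vec{\alpha}\in A_i} D_{k,i}=0$ on $\upsigma(A)$, rather than $D_k=0$, dovetails precisely with how the paper later uses this theorem (e.g., in the appendix proof of Theorem~\ref{uniqueness_proof}, where non-unique shift vectors differ by exactly such a $T_k$). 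Your proof is arguably more transparent than a derivation-by-analogy from the multiplicative case, since it exploits directly that the shift model is linear and hence amenable to an inner-product/orthogonality argument.
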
{}

The detailed formulation and relation to prove its uniqueness and existence of solution for the \textbf{CSP} problem can be derived using \cite{acmrs}.

\subsection{Uniqueness and Full Support}
Although the canonical shifted tensor is unique, the shifted vectors, or the resulting $A'$ from the $\SCCA$ process, may not be unless there are sufficient known entries to provide {\em full support}, which is now defined. 

\begin{definition} Given A, we say that tensor is {\em fully supported} if for every entry $\Vec{\alpha} \in \unknown{A}$, {\em there exists} a nonzero vector $\vec{s}$ such that $\left(\alpha_1 + \delta_1 s_1, \cdots, \alpha_d + \delta_d s_d \right) \in \known{A}$ for all $2^{d-1}$ choices of $\delta_i \in \{0,1\}$ such that $\delta_1 + \cdots \delta_d > 0$. We denote $\vec{\alpha}' = \left(\alpha_1 + \delta_1 s_1, \cdots, \alpha_d + \delta_d s_d \right) = \vec{\alpha} + \vec{\delta}\cdot \vec{s}$ and the set of such vectors $\Vec{\alpha}'$ as $H(\Vec{\alpha}, \Vec{s})$.
\label{fully-supported-tensor}
\end{definition}
Definition \ref{fully-supported-tensor} essentially means that each unknown entry forms a vertex of a hypercube with known entries at the other vertices. For example, in the matrix case, an unknown entry $(i,j)$ must have known entries at ($i$+$p$,\,$j$), ($i$,\,$j$+$q$), and ($i$+$p$,\,$j$+$q$) for some $p\neq 0$ and $q\neq 0$. The following theorem establishes the uniqueness of the recommendation or entry-completion result assuming full support.
\begin{theorem} \textbf{(Uniqueness)}
The result from $\SCCA (A, k)$ is unique, even if there are different sets of shifting vectors ${\matrixsubtensor} \leftarrow \CSA(A, k)$.
\label{uniqueness_proof}
\end{theorem}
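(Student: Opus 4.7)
The plan is to bootstrap Theorem~\ref{uniquess_A'}: any two valid outputs of CSA produce the same canonical tensor $A'$, so they can differ only by a vector $T_k$ whose ``signed subtensor sums'' vanish at every known entry. Concretely, I would let $\matrixsubtensor$ and $\matrixsubtensor^\prime$ be two possible outputs of $\CSA(A,k)$ and set $T_k := \matrixsubtensor^\prime - \matrixsubtensor$. Since both satisfy CSP with the same canonical $A'$, writing $A'(\vec{\alpha})$ in the two ways on each known entry gives
\[
\sum_{i:\ \vec{\alpha}'\in A_i} T_{k,i} \;=\; 0 \qquad \text{for every } \vec{\alpha}'\in\upsigma(A).
\]
Because both SCCA runs leave the known entries of $A$ unchanged, uniqueness of the SCCA output reduces to showing that the discrepancy
\[
D(\vec{\alpha}) \;:=\; \sum_{i:\ \vec{\alpha}\in A_i} T_{k,i}
\]
vanishes at every unknown $\vec{\alpha}\in\unknown{A}$.

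Next I would deploy the full-support hypothesis of Definition~\ref{fully-supported-tensor}. Given $\vec{\alpha}\in\unknown{A}$, pick the vector $\vec{s}$ supplied by that definition; the $d=2$ example ($p\neq 0$, $q\neq 0$) makes it clear that $\vec{s}$ should be read as having all components nonzero, which I use below. Every hypercube vertex $\vec{\alpha}+\vec{\delta}\cdot\vec{s}$ with $\vec{\delta}\in\{0,1\}^d\setminus\{\vec{0}\}$ is then a known entry, so $D(\vec{\alpha}+\vec{\delta}\cdot\vec{s})=0$ by the previous display. Hence the inclusion--exclusion identity
\[
D(\vec{\alpha}) \;=\; \sum_{\vec{\delta}\in\{0,1\}^d} (-1)^{|\vec{\delta}|}\, D(\vec{\alpha}+\vec{\delta}\cdot\vec{s})
\]
holds trivially (only the $\vec{\delta}=\vec{0}$ summand on the right is a priori nonzero), and the substantive content of the theorem is that the right-hand side vanishes identically.

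The main obstacle, which I would attack by swapping the order of summation on the right, is a subtensor-by-subtensor combinatorial identity: for every $k$-dimensional subtensor $A_i$,
\[
N_i \;:=\; \sum_{\vec{\delta}\in\{0,1\}^d:\ \vec{\alpha}+\vec{\delta}\cdot\vec{s}\,\in A_i} (-1)^{|\vec{\delta}|} \;=\; 0.
\]
Parameterizing $A_i$ by its permutation $\pi\subset[d]$ of cardinality $k$ together with the fixed coordinates $v_j$, $j\in\pi^c$, the membership condition $\alpha_j+\delta_j s_j = v_j$ and $s_j\neq 0$ together either pin $\delta_j$ uniquely to $0$ or to $1$ in every fixed dimension (depending on whether $v_j=\alpha_j$ or $v_j=\alpha_j+s_j$) or make $N_i$ empty. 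The $k$ coordinates indexed by $\pi$ are completely free, so $N_i$ factors as a sign coming from the fixed dimensions times $\prod_{j\in\pi}(1+(-1))=0^k=0$, which vanishes because $k\geq 1$. Hence $N_i=0$ for every $i$, the swapped double sum vanishes, $D(\vec{\alpha})=0$, and the SCCA output is independent of which $\matrixsubtensor$ the inner CSA returned. The subtle point is precisely that Definition~\ref{fully-supported-tensor} is calibrated to the $k$-dimensional subtensor structure: every such subtensor has at least one ``free'' direction at $\vec{\alpha}$, and a single free direction already forces the alternating sum over the corresponding edge of the hypercube to cancel.
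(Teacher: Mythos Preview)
Your proposal is correct and follows essentially the same strategy as the paper: reduce to showing that the discrepancy $\sum_{i:\vec{\alpha}\in A_i} T_{k,i}$ vanishes at each unknown $\vec{\alpha}$ by exploiting the hypercube of known entries supplied by full support, and then establish a per-subtensor cancellation over the hypercube vertices. The paper carries out that cancellation by partitioning the non-corner vertices into two parity classes $G_0,G_1$ and counting $|G_m\cap A_i|$ with binomial coefficients, whereas you phrase it as a single signed sum $\sum_{\vec{\delta}}(-1)^{|\vec{\delta}|}D(\vec{\alpha}+\vec{\delta}\cdot\vec{s})$ and obtain the cancellation from the product factor $\prod_{j\in\pi}(1+(-1))=0$; these are two presentations of the same combinatorial identity.
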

The detailed proof is in the Appendix \ref{appendix_uniqueness}.

\subsection{Shift Consistency}
$A'$=$\CSA(A,k)$ guarantees shift-invariance with respect to every $k$-dimensional subtensor of $A'$. It then remains to show that the completion result from $\SCCA(A,k)$ is shift-consistent.

\begin{theorem} (Shift-consistency)
Given a tensor $A$ and an arbitrary conformant shifting vector $T_{k}$,  $T_{k} \ominus_k \SCCA(A, k)\,=\,\SCCA(T_{k} \ominus_k A, k)$,
where $T_k$ shifts all $k$-dimensional subtensors of $A$ (with operator $\ominus_k$ as defined in definition \ref{scaleDef}).
\label{shift-consistecy}
\end{theorem}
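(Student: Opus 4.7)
Fix any shifting vector $S_k$ returned by $\CSA(A,k)$, and let $B = \SCCA(A,k)$ be the resulting completion; by Algorithm \ref{alg:scca}, $B(\vec{\alpha}) = A(\vec{\alpha})$ on $\known{A}$ and $B(\vec{\alpha}) = \sum_{i: \vec{\alpha} \in A_i} S_{k,i}$ on $\unknown{A}$. Write $C = T_k \ominus_k A$ and $D = \SCCA(C,k)$. Since $T_k \ominus_k B$ and $D$ are both fully filled tensors with the same index set, it suffices to (i) exhibit an explicit admissible CSA output for the input $C$, (ii) use it to compute $D$ pointwise, and (iii) compare with $T_k \ominus_k B$ entry by entry.

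\textbf{Candidate shift for $C$.} The central claim is that $S_k' \coloneqq S_k - T_k$ solves the CSP for $C$. Applying Definition \ref{scaleDef} twice, for every $\vec{\alpha} \in \known{A} = \known{C}$,
\begin{equation*}
    (S_k' \ominus_k C)(\vec{\alpha}) \;=\; A(\vec{\alpha}) - \sum_{i: \vec{\alpha} \in A_i} T_{k,i} - \sum_{i: \vec{\alpha} \in A_i} (S_{k,i} - T_{k,i}) \;=\; A(\vec{\alpha}) - \sum_{i: \vec{\alpha} \in A_i} S_{k,i}.
\end{equation*}
Summing this identity over $\known{C_i} = \known{A_i}$ yields zero, since $(B, S_k)$ already solves the CSP for $A$. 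Hence $S_k'$ is an admissible output of $\CSA(C,k)$.

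\textbf{Comparison and uniqueness.} Instantiating $\SCCA(C,k)$ with $S_k'$ gives $D(\vec{\alpha}) = C(\vec{\alpha}) = A(\vec{\alpha}) - \sum_{i: \vec{\alpha} \in A_i} T_{k,i}$ on $\known{A}$ and $D(\vec{\alpha}) = \sum_{i: \vec{\alpha} \in A_i} (S_{k,i} - T_{k,i})$ on $\unknown{A}$. Evaluating $T_k \ominus_k B$ pointwise yields exactly the same two expressions (the known case because $B$ agrees with $A$ there, the unknown case because $B$ already stores $\sum_{i: \vec{\alpha} \in A_i} S_{k,i}$), so $D = T_k \ominus_k B$ under this choice of $S_k'$. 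The only subtlety, and the main obstacle, is that $\CSA(C,k)$ may in fact return a different admissible vector $S_k'' \neq S_k - T_k$, since Theorem \ref{uniquess_A'} only pins down the shift vector up to null elements. This is precisely what Theorem \ref{uniqueness_proof} (uniqueness under full support) is designed to handle: the $\SCCA$ completion does not depend on which admissible shift vector $\CSA$ happens to return, so the equality $D = T_k \ominus_k B$ established for the witness $S_k - T_k$ lifts to the actual output of the algorithm.
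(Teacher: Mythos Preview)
Your proof is correct and follows essentially the same route as the paper: both arguments rest on the shift-invariance of the CSA canonical form, exhibit $S_k - T_k$ (the paper writes it as absorbing $T_k$ into $S_k$) as an admissible shift vector for the shifted input, and then invoke Theorem \ref{uniqueness_proof} to dispose of the non-uniqueness of the shift vector. Your version is simply more explicit, carrying out the entry-by-entry verification that the paper compresses into the single line $T_k \ominus_k \SCCA(A,k) = (S_k + T_k)\ominus_k A' = \SCCA(T_k \ominus_k A, k)$.
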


\begin{proof}
Let $\coeffsubtensorScaled_k \leftarrow \CSA(A, k)$, then $A' = \CSA(T_k \ominus_k A, k) = \CSA(A, k)$ for all $A$. For simplicity, we assume all unknown entries of $A'$ are assigned the value of 0, i.e., $A'(\vec{\alpha}) = 0$ for $\valpha \in \unknown{A}$. The SCCA output can then be defined as $\SCCA(A, k) = \coeffsubtensorScaled_{k}\ominus_k A'$. Now, using the uniqueness theorem \ref{uniqueness_proof}, we can subsume the shifting vector $T_k$ into $S_k$ and derive that
\begin{equation}
    T_{k} \ominus_k \SCCA(A, k) =  (\coeffsubtensorScaled_{k} + T_{k}) \ominus_k A' 
    = \SCCA(T_{k} \ominus_k A, k) \,.
\end{equation}
\end{proof} 

\subsection{Consensus Ordering Property}

We now proceed to prove that our methods provide solutions that satisfy the consensus-ordering admissibility property of \cite{acmrs}. For any recommender system, we show that the following method satisfies consensus ordering with respect to an ordering relationship of the $(d$$-$$1)$-dimensional subtensors of a given $d$-dimensional tensor $A$. 

\begin{definition} Consensus Ordering Set: Given a tensor $A$ and vector $\Vec{\alpha} \in \mathbb{Z}^{d-1}_{> 0}$, we define a order-($d-1$) tensor $A^{(n)} \in V_1 \otimes V_1 \otimes \cdots \otimes V_{d-1}$ as
\begin{equation}
    A^{(n)}(\Vec{\alpha}) \coloneqq A(\alpha_1, \cdots, \alpha_{d-1}, n), \qquad 1 \leq n \leq n_d.
\end{equation}
For a permutation vector $\gamma \subset [n_d]$ with size $D$ and tensor $A$, we define a set of known vectors $\known{\gamma}$ that preserves/follows ordering $\gamma$ in tensor $A$ {\em iff}
\begin{equation}
\begin{aligned}
\known{A^{(\gamma_i)}} &= \known{\gamma}, \quad \text{for all} \ 1 \leq i \leq D. \\
   A^{(\gamma_a)}(\vec{\alpha}) &< A^{(\gamma_b)}(\vec{\alpha}) \qquad \forall  \vec{\alpha} \in \known{ \gamma} \text{ and } 1 \leq a < b \leq D.
\end{aligned}
\end{equation}
Then the set of absent/missing vectors $\unknown{\gamma}$ satisfies $\unknown{\gamma} = \unknown{A^{(\gamma_i)}}$ for all $1 \leq i \leq D $.
\label{unanimous}
\end{definition}

Now we can formally state the following theorem.

\begin{theorem} Consensus Ordering: 
     Given a tensor $A$ and obtained result $A' = \SCCA(A, d-1)$, and permutation vector $\gamma$, $\known{\gamma} \neq \emptyset$, then any completion vector $\vec{\alpha} \in \unknown{\gamma}$ must satisfy $ A'^{(\gamma_a)}(\vec{\alpha}) < A'^{(\gamma_b)}(\vec{\alpha}) \text{ when } a < b.$
     \label{unanimous-ordering}
\end{theorem}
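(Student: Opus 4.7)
\noindent\textbf{Proof proposal for Theorem \ref{unanimous-ordering}.}

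The plan is to show that, with $k=d-1$, every difference $A'^{(\gamma_b)}(\vec\alpha)-A'^{(\gamma_a)}(\vec\alpha)$ collapses to the difference of two scalar shift-coefficients attached to the last-coordinate slices, and then to extract the sign of that single difference from the CSP zero-sum constraint. When $k=d-1$ the $k$-dimensional subtensors are exactly the slices obtained by fixing one coordinate, so every point $(\vec\alpha,\gamma_n)$ lies in exactly $d$ subtensors, one per coordinate direction. Using the completion rule in the SCCA pseudocode and writing $s_{j,m}$ for the shift coefficient associated with ``fix dimension $j$ at value $m$,'' the imputed value for an absent entry takes the additive form
\begin{equation}
A'^{(\gamma_n)}(\vec\alpha) \;=\; \sum_{j=1}^{d-1} s_{j,\alpha_j} \;+\; s_{d,\gamma_n}, \qquad \vec\alpha\in\unknown{\gamma}.
\end{equation}

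Because the first $d-1$ terms do not depend on $n$, subtracting the expressions at $n=a$ and $n=b$ yields $A'^{(\gamma_b)}(\vec\alpha)-A'^{(\gamma_a)}(\vec\alpha)=s_{d,\gamma_b}-s_{d,\gamma_a}$, so the theorem reduces to the single scalar inequality $s_{d,\gamma_a}<s_{d,\gamma_b}$. To pin this inequality down, I would apply the CSP zero-sum constraint to the slice subtensor $A^{(\gamma_n)}$. Since $A'$ agrees with $A$ on $\sigma(A)$ and since, by Definition \ref{unanimous}, $\sigma(A^{(\gamma_n)})=\sigma(\gamma)$ is the \emph{same} index set for every $n$, the constraint reads
\begin{equation}
\sum_{\vec\alpha\in\sigma(\gamma)} A^{(\gamma_n)}(\vec\alpha) \;-\; \sum_{\vec\alpha\in\sigma(\gamma)}\sum_{j=1}^{d-1} s_{j,\alpha_j} \;-\; |\sigma(\gamma)|\,s_{d,\gamma_n} \;=\; 0.
\end{equation}
The middle sum, call it $C$, is independent of $n$ because its summation domain is fixed. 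Solving for $s_{d,\gamma_n}$ and differencing across $a<b$ gives
\begin{equation}
s_{d,\gamma_b}-s_{d,\gamma_a} \;=\; \frac{1}{|\sigma(\gamma)|}\sum_{\vec\alpha\in\sigma(\gamma)}\bigl(A^{(\gamma_b)}(\vec\alpha)-A^{(\gamma_a)}(\vec\alpha)\bigr),
\end{equation}
and every summand on the right is strictly positive by the ordering hypothesis in Definition \ref{unanimous}, provided $\sigma(\gamma)\neq\emptyset$. Combining with the first step closes the proof.

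The one delicate point, and what I expect to be the main obstacle to writing cleanly, is that Theorem \ref{uniquess_A'} warns the vector $S_k$ produced by CSA need not be unique: different admissible $S_k$'s can give the same canonical tensor. I would handle this by invoking Theorem \ref{uniqueness_proof} to note that the \emph{completed} tensor $A'$ is unique regardless, so the difference $A'^{(\gamma_b)}(\vec\alpha)-A'^{(\gamma_a)}(\vec\alpha)$ is well defined and must coincide with the value $s_{d,\gamma_b}-s_{d,\gamma_a}$ computed from \emph{any} valid $S_k$; equivalently, one checks that any admissible perturbation $T_k$ of $S_k$ satisfying the constraint of Theorem \ref{uniquess_A'} leaves $s_{d,\gamma_b}-s_{d,\gamma_a}$ invariant. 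With that caveat in place, the three displays above give the claim.
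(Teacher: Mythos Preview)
Your proposal is correct and follows essentially the same route as the paper's own proof: decompose the known entries via $A^{(\gamma_n)}(\vec\alpha)=s_{d,\gamma_n}+\sum_{j<d}s_{j,\alpha_j}+(\text{canonical residual})$, apply the CSP zero-sum constraint on the last-coordinate slice to order the scalars $s_{d,\gamma_n}$, and then read off the ordering of the completed entries from the SCCA formula $A'^{(\gamma_n)}(\vec\alpha)=s_{d,\gamma_n}+\sum_{j<d}s_{j,\alpha_j}$. Your version is in fact slightly more explicit than the paper's---you write down the closed form for $s_{d,\gamma_b}-s_{d,\gamma_a}$ as a positive average and you address the non-uniqueness of $S_k$ directly via Theorem \ref{uniqueness_proof}, whereas the paper simply invokes that theorem in passing---but the underlying argument is identical.
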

The proof can be found in the Appendix \ref{consensus-proof}.

\section{SC Application for Recommender System}

In this section, we examine the application of SC completion to recommender systems. Because of its particular relevance to matrix-formulated recommender-system problems, we briefly 
discuss the special case of $d=2$, $k=1$, for a given $m\times n$ matrix\footnote{Because all results in this paper are transposition consistent, we implicitly assume without loss of generality that $n\geq m$ purely to be consistent with our general use of $n$ as the variable that functionally determines the time and space complexity of our algorithms.} $A\in \R_{> 0}^{m\times n}$ with full support. For notational convenience, we define the matrix completion function $\MCA(A)$ as a special case of SCCA:
\begin{equation}
       \MCA(A) ~\equiv ~ SCCA(A,1) ~~ \mbox{for $d=2$}\, .
\end{equation}
In this case, the shift-consistency property can be expressed as $R \otimes \mathbf{1}_n + \MCA(A) +  \mathbf{1}_m \otimes C  = \MCA(R\otimes \mathbf{1}_n + A + \mathbf{1}_m \otimes C)$ for vectors $R$ and $C$. The time and space
complexity for $\MCA(A)$ is $O(|\upsigma (A)|)$.

Theorem \ref{unanimous-ordering} and Definition \ref{unanimous} provide a means to specify the recommendation method in the following definition.
\begin{definition}
Denote $RS(\Vec{\alpha}) = A'(\Vec{\alpha})$ as the recommendation result for vector position $\Vec{\alpha} \in \mathbb{Z}^{d}$ from tensor $A$ with $A' = \SCCA(A, d-1)$. 
\end{definition}

\subsection{Consensus Ordering}
Following Definition \ref{unanimous} and Theorem \ref{unanimous-ordering}, we formally state the consensus ordering property in the context of a matrix and 3-dimensional tensor. 

\begin{corollary} 
Given a matrix $A \in \mathbb{R}^{m \times n}$, $\MCA(A)$, set of products $P \subseteq [n]$, and set of users $U \subseteq [m]$. We have the following statements:
\begin{enumerate}
    \item Given a permutation by products $\gamma_P$ and nonempty set $\known{\gamma_P}$. Then the recommendation result $RS(u, p)$ for any user $u \in \unknown{\gamma_P}$ and product $p\in \gamma_P$ follows consensus ordering $\gamma_P$. 
    \item Given a permutation by users $\gamma_U$ and nonempty set $\known{\gamma_U}$. Then the recommendation result $RS(u, p)$ for any product $p \in \unknown{\gamma_U}$ and user $u \in \gamma_U$ follows consensus ordering $\gamma_U$. 
\end{enumerate}
\end{corollary}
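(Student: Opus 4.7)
The plan is to derive both parts of the corollary as direct specializations of Theorem \ref{unanimous-ordering} with $d=2$ and $k=d-1=1$, since $\MCA(A) = \SCCA(A,1)$ by definition.

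For part (1), I would identify the ``last'' tensor dimension in Definition \ref{unanimous} with the product index. Under this identification the permutation vector $\gamma$ in the definition becomes the product permutation $\gamma_P$, each $(d-1)$-dimensional subtensor $A^{(n)}$ is simply the $n$-th column of $A$, and $\known{\gamma_P}$ (resp.\ $\unknown{\gamma_P}$) is exactly the common set of users who have rated every product in $\gamma_P$ with ratings in the order $\gamma_P$ (resp.\ the set of users who have rated none of them). Theorem \ref{unanimous-ordering} applied to $A' = \SCCA(A,1)$ then gives $A'^{(\gamma_{P,a})}(u) < A'^{(\gamma_{P,b})}(u)$ for every $u \in \unknown{\gamma_P}$ and every $a<b$. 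Since $A'^{(\gamma_{P,j})}(u) = A'(u, \gamma_{P,j}) = RS(u, \gamma_{P,j})$, this is exactly the assertion of part (1).

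For part (2), I would invoke the symmetry of $\MCA$ in its two dimensions. The CSA inner loop in Algorithm \ref{alg:csa} iterates over all one-dimensional subtensors of $A$, which in the matrix case is the union of rows and columns; the update rules for $\rho_i$ and for $A'$ never distinguish row orientation from column orientation. Hence $\MCA(A^{\top}) = [\MCA(A)]^{\top}$. Reapplying part (1) to $A^{\top}$, with the user dimension now playing the role of ``last'' dimension, gives $A'(\gamma_{U,a}, p) < A'(\gamma_{U,b}, p)$ for every $p \in \unknown{\gamma_U}$ and $a<b$, which is part (2).

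The main issue is not a genuine difficulty but a bookkeeping one: I have to verify that the choice of ``last'' dimension in Theorem \ref{unanimous-ordering} is immaterial, which for $d=2$ reduces to the transpose invariance of $\MCA$ noted above. Uniqueness (Theorem \ref{uniqueness_proof}) together with the existence clause of Theorem \ref{uniquess_A'} guarantees that regardless of which orientation one adopts, the same completed $A'$ is produced, so the two applications of Theorem \ref{unanimous-ordering} -- one per orientation -- deliver both statements simultaneously with no additional calculation.
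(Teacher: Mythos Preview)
Your proposal is correct and follows essentially the same approach as the paper: the paper offers no separate proof for this corollary, treating both parts as immediate specializations of Theorem~\ref{unanimous-ordering} together with the observation that the SCCA/MCA construction is symmetric in its coordinate labels (the paper explicitly notes that ``all results in this paper are transposition consistent''). Your added justification of transpose invariance via the row/column symmetry of the CSA inner loop is slightly more explicit than what the paper provides, but it is the same idea.
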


For a 3D tensor, we have the following interpretation.
\begin{corollary} Given a $3$-dimensional tensor $A$. For the recommendation result $\SCCA(A, 2)$, we have the following statements:

\begin{enumerate}
    \item  Given a permutation by users $\gamma_U$ and nonempty set $\known{\gamma_U}$. Then the recommendation result $RS(u, \alpha, p)$ for attribute-product vector $(\alpha, p)\in \unknown{\gamma_U}$ and each user $u \in \gamma_U$ follows consensus ordering $\gamma_U$.

    \item  Given a permutation by products $\gamma_P$ and nonempty set $\known{\gamma_P}$. Then the recommendation result $RS(u, \alpha, p)$ for user-attribute vector $(u, \alpha)\in \unknown{\gamma_P}$ and each product $p\in \gamma_P$ follows consensus ordering $\gamma_P$.
    
    \item  Given a permutation by attributes $\gamma_{\Gamma}$ and nonempty set $\known{\gamma_{\Gamma}}$. Then the recommendation result $RS(u, \alpha, p)$ for user-product vector $(u, p)\in \unknown{\gamma_{\Gamma}}$ and each attribute $\alpha \in \gamma_{\Gamma}$ follows consensus ordering $\gamma_{\Gamma}$.
\end{enumerate}
\label{3d}
\end{corollary}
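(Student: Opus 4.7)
The plan is to derive Corollary \ref{3d} as three direct specializations of Theorem \ref{unanimous-ordering} to $d=3$, $k = d - 1 = 2$, one for each dimension (users, products, attributes). The underlying observation is that the set $\mathcal{A}$ of $k$-dimensional subtensors (Definition \ref{subtensors_def}) and the shift operator $\ominus_k$ (Definition \ref{scaleDef}) are defined symmetrically over all $d$ index positions, so $\SCCA(\cdot, 2)$ commutes with any permutation of the tensor's dimensions. This lets us freely designate any one of the three axes as the ``slicing'' dimension $n_d$ used in Definition \ref{unanimous}.

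For statement 1, I would identify the user dimension with the trailing index of Definition \ref{unanimous}. For each user $u$, the slice $A^{(u)}$ is the 2-dimensional attribute-by-product subtensor, so $\known{\gamma_U}$ consists precisely of the attribute-product pairs $(\alpha, p)$ on which every user in $\gamma_U$ has a known rating satisfying the strict ordering condition. With $A' = \SCCA(A, 2)$, we have $RS(u, \alpha, p) = A'(u, \alpha, p) = A'^{(u)}(\alpha, p)$ by definition, so the conclusion of Theorem \ref{unanimous-ordering} yields $RS(\gamma_a, \alpha, p) < RS(\gamma_b, \alpha, p)$ whenever $a < b$, which is the stated consensus ordering.

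For statements 2 and 3, I apply the same argument after permuting the tensor indices so that the products dimension (respectively the attributes dimension) occupies the last position. If $P$ is such a permutation on $\{1,2,3\}$ and $\tilde{A}$ is the re-indexed tensor, the symmetry of $\mathcal{A}$ and $\ominus_k$ gives $\SCCA(\tilde{A}, 2) = \widetilde{\SCCA(A, 2)}$, i.e., SCCA commutes with $P$. Applying Theorem \ref{unanimous-ordering} to $\tilde{A}$ then produces the consensus ordering along the permuted last dimension, and unpermuting by $P^{-1}$ delivers exactly the statement about products or attributes in the original indexing.

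The main obstacle is justifying the dimension-symmetry of SCCA that underlies the reduction for statements 2 and 3. This amounts to tracking how the canonical shifting coefficients in Algorithm \ref{alg:csa} transform under an index permutation: because the outer loop iterates symmetrically over all $k$-dimensional subtensors of $A$ irrespective of which dimensions they span, the coefficients $\coefsubtensor$ permute in the natural way and the completion step in Algorithm \ref{alg:scca} inherits the same equivariance. Once this bookkeeping is done, the remainder is a direct translation between the generic notation of Theorem \ref{unanimous-ordering} and the user/product/attribute vocabulary of the recommender-system setting.
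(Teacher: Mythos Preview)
Your proposal is correct and matches the paper's own justification: the paper does not give a formal proof of Corollary \ref{3d} but simply remarks that it ``follows directly from theorem \ref{unanimous-ordering}'' by the ``symmetry of ordering with respect to the choice of spatial label on different coordinates.'' Your write-up is a more explicit version of exactly that argument---specializing Theorem \ref{unanimous-ordering} to $d=3$, $k=2$ and invoking the permutation-equivariance of $\SCCA$ to rotate each of the three axes into the slicing position of Definition \ref{unanimous}---so there is no substantive difference in approach.
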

This corollary extends the 
symmetry property of theorem \ref{unanimous-ordering} with respect to any dimensions of preference in the tensor: users, products, and attributes, i.e., whether attributes are defined with respect to users or products. The generalization from a 2-dimensional matrix to a 3-dimensional tensor even further exploits this symmetry of ordering with respect to the choice of spatial label on different coordinates. 

This corollary establishes how the consensus ordering property with respect to the same index label in the original tensor is preserved in the output tensor. Specifically, the ordering follows directly from theorem \ref{unanimous-ordering} whenever there exists unanimity among attributes on the first $d$-$1$ dimensions with respect to the last coordinate.

In the following section, we conclude our analysis of the framework with consideration of how shift consistency provides a precise understanding of fairness in terms of the impact of each user's ratings on the recommendations generated by a SC recommender system.

\subsection{Fairness and Robustness}
\label{fairness}

We note that the SC criterion leads to recommendations that are invariant with respect to a global shifting of any row or column of the RS matrix, so an SC-based recommender system is less susceptible to biases in the sense that a user cannot add a constant to each rating to influence the ``weight'' given to it by, e.g., a squared-error minimizing RS that gives more weight to higher (larger magnitude) ratings than to lower ones. This invariance of SC to such shifts can be formalized as follows:

\begin{theorem}
Let $A$ be a tensor, $RS$ be the recommender system obtained from $SCCA(A, d-1)$, and $RS(u, p)$ denote the recommendation result for user $u$ and product $p$ in $RS$. For a given user $u$ and a subtensor $A_i$ representing their subset of ratings for different products, let $T_k$ be a shifting vector applied by user $u$ to the $RS$ such that $T_{k, i} \neq 0$ and $T_{k, i'} = 0$ for $i' \neq i$, resulting in a new recommender system $RS_{new}$. Then for any user $u' \neq u$ and product $p$, we have $RS_{new}(u', p) = RS(u', p)$.
\end{theorem}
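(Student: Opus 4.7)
The plan is to reduce the claim to the shift-consistency theorem (Theorem \ref{shift-consistecy}) together with a structural observation about where the shift vector $T_k$ is supported. In one line, the argument will be: applying $T_k$ before SCCA equals applying it after SCCA, and because $T_k$ is supported only on subtensor $A_i$ belonging to user $u$, the ``after'' version is zero on every entry not indexed by user $u$.

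In more detail, I would first rewrite $RS_{new}$ using the hypothesis. By construction, $RS_{new}$ is the SCCA output applied to the shifted tensor, i.e.,
\begin{equation}
RS_{new}(u',p) \;=\; \SCCA(T_k \ominus_{d-1} A,\, d-1)(u',p).
\end{equation}
Invoking Theorem \ref{shift-consistecy} with $k = d-1$, this equals
\begin{equation}
\bigl(T_k \ominus_{d-1} \SCCA(A,d-1)\bigr)(u',p) \;=\; \bigl(T_k \ominus_{d-1} RS\bigr)(u',p).
\end{equation}
Now I would unpack the right-hand side using the pointwise formula (\ref{simple-alpha-shifting}): for $\vec{\alpha} = (u',p)$,
\begin{equation}
(T_k \ominus_{d-1} RS)(\vec{\alpha}) \;=\; RS(\vec{\alpha}) \;-\; \sum_{j:\, \vec{\alpha} \in A_j} T_{k,j}.
\end{equation}
The hypothesis $T_{k,j} = 0$ for all $j \neq i$ collapses this sum to at most the single term $T_{k,i}$, and it contributes only when $\vec{\alpha} \in A_i$.

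The remaining step, and the one that requires the most care in bookkeeping, is to argue that $(u',p) \notin A_i$ whenever $u' \neq u$. Here I would appeal to the premise that $A_i$ is ``the subtensor representing user $u$'s subset of ratings'': under the SCCA $(A,d-1)$ setup, $(d-1)$-dimensional subtensors correspond to fixing one coordinate (the ``user'' axis), so every $\vec{\alpha} \in A_i$ has first coordinate equal to $u$. Since $u' \neq u$, $\vec{\alpha} = (u',p) \notin A_i$, the sum in the displayed formula vanishes, and we conclude
\begin{equation}
RS_{new}(u',p) \;=\; RS(u',p) - 0 \;=\; RS(u',p),
\end{equation}
which is the claim.

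The only real obstacle is notational rather than mathematical: the theorem statement slightly conflates the user's ``row'' with a single subtensor index $i$, and for $d > 2$ one must verify that all $(d-1)$-subtensors on which $T_k$ is nonzero are indeed contained in the slice $\{u\} \times V_2 \times \cdots \times V_d$. Once that identification is pinned down (which I would do by invoking Definition \ref{subtensors_def} and choosing the permutation $\pi$ corresponding to the user axis), the proof is a two-step manipulation: (i) commute $T_k$ through SCCA via Theorem \ref{shift-consistecy}, then (ii) evaluate pointwise at an entry outside the support of $T_k$.
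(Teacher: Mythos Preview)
Your argument is correct and is exactly the intended one: invoke shift-consistency (Theorem~\ref{shift-consistecy}) to commute $T_k$ past $\SCCA$, then use the pointwise formula~(\ref{simple-alpha-shifting}) together with the support condition on $T_k$ to see that the shift vanishes at any entry whose user coordinate differs from $u$. The paper in fact does not spell out a proof for this theorem at all --- it is stated in Section~\ref{fairness} and immediately followed by discussion --- so your two-step derivation (commute, then evaluate off-support) is precisely the argument the surrounding framework is set up to deliver, and your remark about pinning down the identification of $A_i$ with the user-$u$ slice via Definition~\ref{subtensors_def} is the only bookkeeping that needs to be made explicit.
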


More generally, our belief is that increasing the number of {\em reasonable} constraints tends to decrease a system's latitude to exhibit {\em un}reasonable behaviors. This is not only desirable for promoting robustness, it is also relevant to fairness, which in part is defined by user perception. For users from marginalized minority groups, there will always be concerns about exactly {\em what} a learning-based system has learned from information that by definition over-represents the majority. The transparency of SC (or UC) constraints therefore may tend to promote a greater sense of inclusivity for such users. In fact, the order-consensus admissibility criterion is explicitly defined in terms not of a majority agreement but on condition of unanimous agreement.   

\section{Empirical Results}

In this section we discuss various issues relating to the evaluation of RS performance. We then provide empirical corroboration for theoretically-proven SC properties relating to fairness and robustness. Our tests are performed using the following MovieLens datasets in Table \ref{table:movielens}:


\begin{table}[h]
\centering
\caption{Statistics of 3 MovieLens datasets}
\label{table:movielens}
\begin{tabular}{ccccc}
\hline
\textbf{Dataset} & \textbf{\# Users} & \textbf{\# Products} & \textbf{\# Ratings} & \textbf{Sparsity level} \\ \hline
MovieLens 100k   & 943  & 1,682  & 100,000  & 93.7\%   \\ 
MovieLens 1M     & 6,040 & 3,900 & 1,000,209 & 95.5\%  \\ 
MovieLens 10M & 69,878 & 10,677 & 10,000,054 & 98.4\% \\ \hline
\end{tabular}
\end{table}

\subsection{RMSE and MAE Tests}

We begin with tests comparing the new SC method to the prior UC method according to simple root-mean-squared error (RMSE) and maximum absolute error (MAE). Our goal is to assess the hypothesis that SC will be less sensitive to the heavy discretization imposed by the limiting of MovieLens ratings to the integers 1 to 5. As can be seen in Figures \ref{100k-rmse}-\ref{10m-mae},
SC exhibits slightly better performance than UC on each of the datasets according to both measures, which provides limited support for the discretization hypothesis as an explanation for that small difference. 

\begin{figure}
\includegraphics[width=4.5in, height=2.5in]{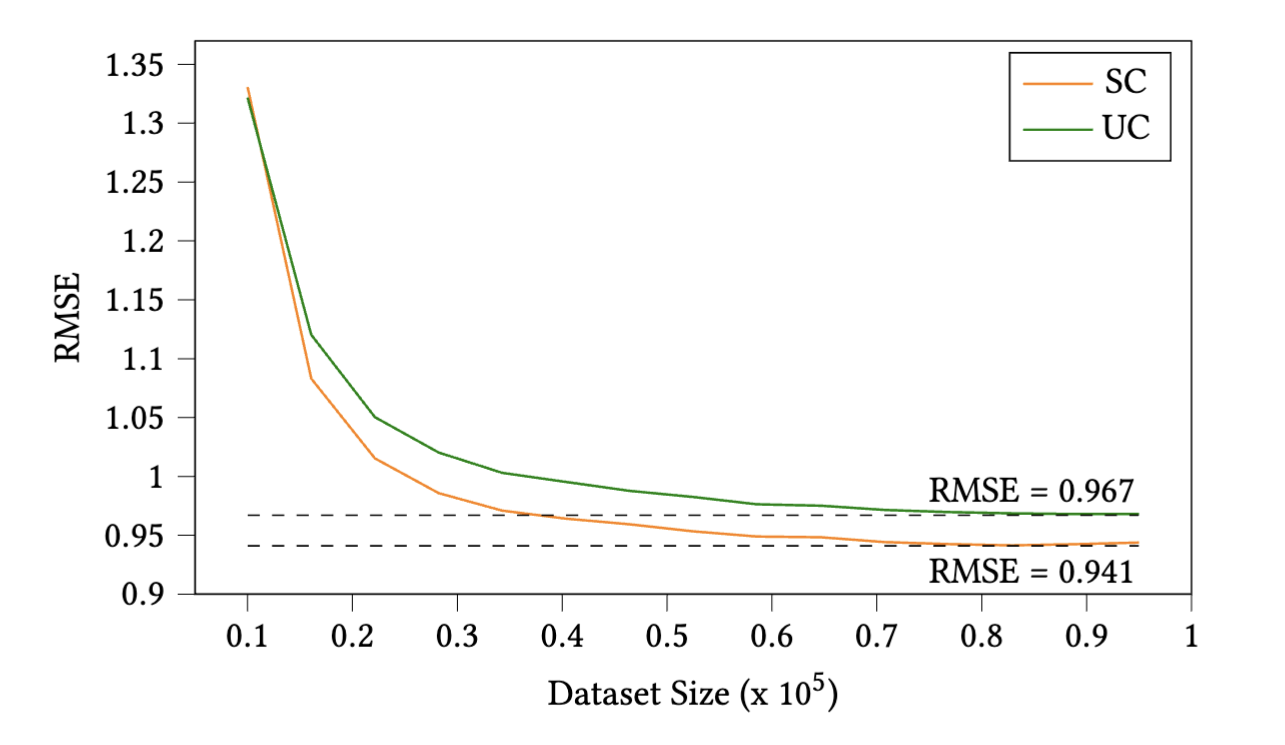}
\caption{This figure shows how the RMSE decreases with dataset size, where the horizontal label represents increasing fractions of the MovieLens 100k dataset. Here the RMSE of SC is $0.02$ smaller than UC.}
\label{100k-rmse}
\end{figure}

\begin{figure}
\includegraphics[width=4.5in, height=2.5in]{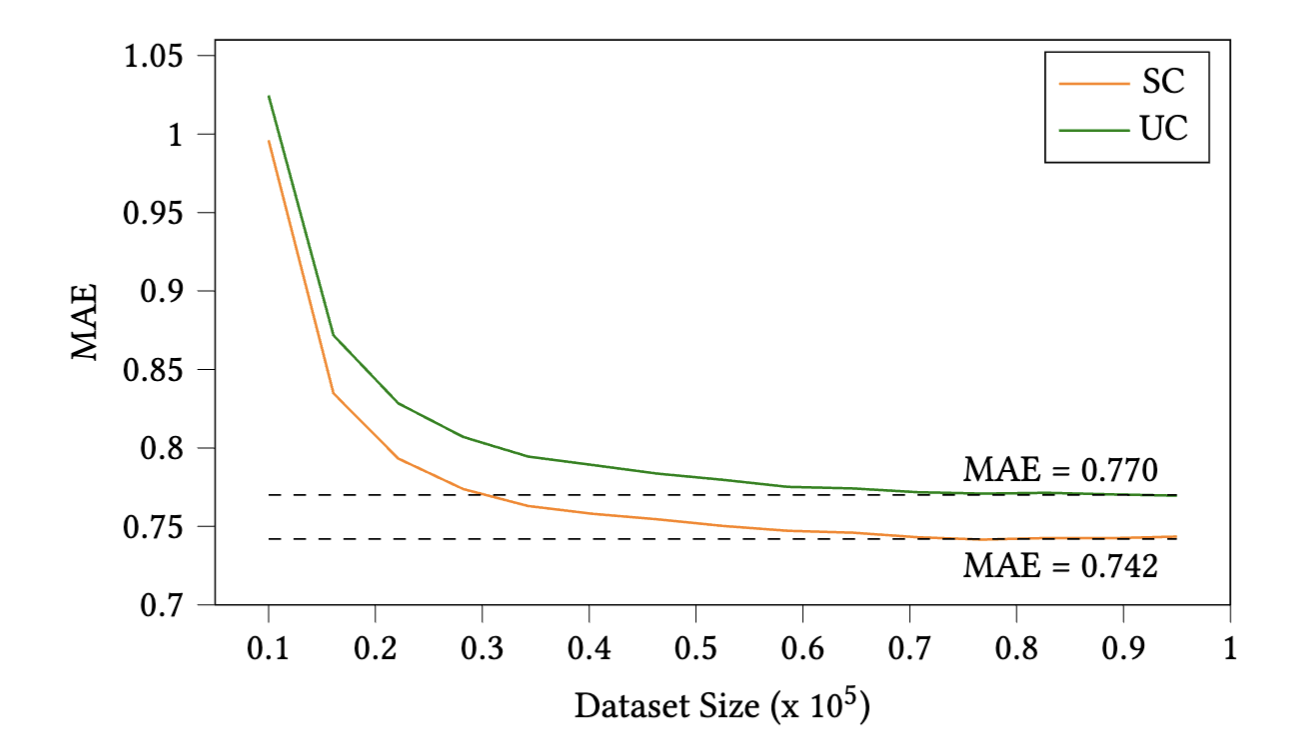}
\caption{This figure shows how the MAE decreases with dataset size, where the horizontal label represents increasing fractions of the MovieLens 100k dataset. Here the RMSE of SC is $0.03$ smaller than UC.}
\label{100k-mae}
\end{figure}

\begin{figure}
\includegraphics[width=4in, height=2.5in]{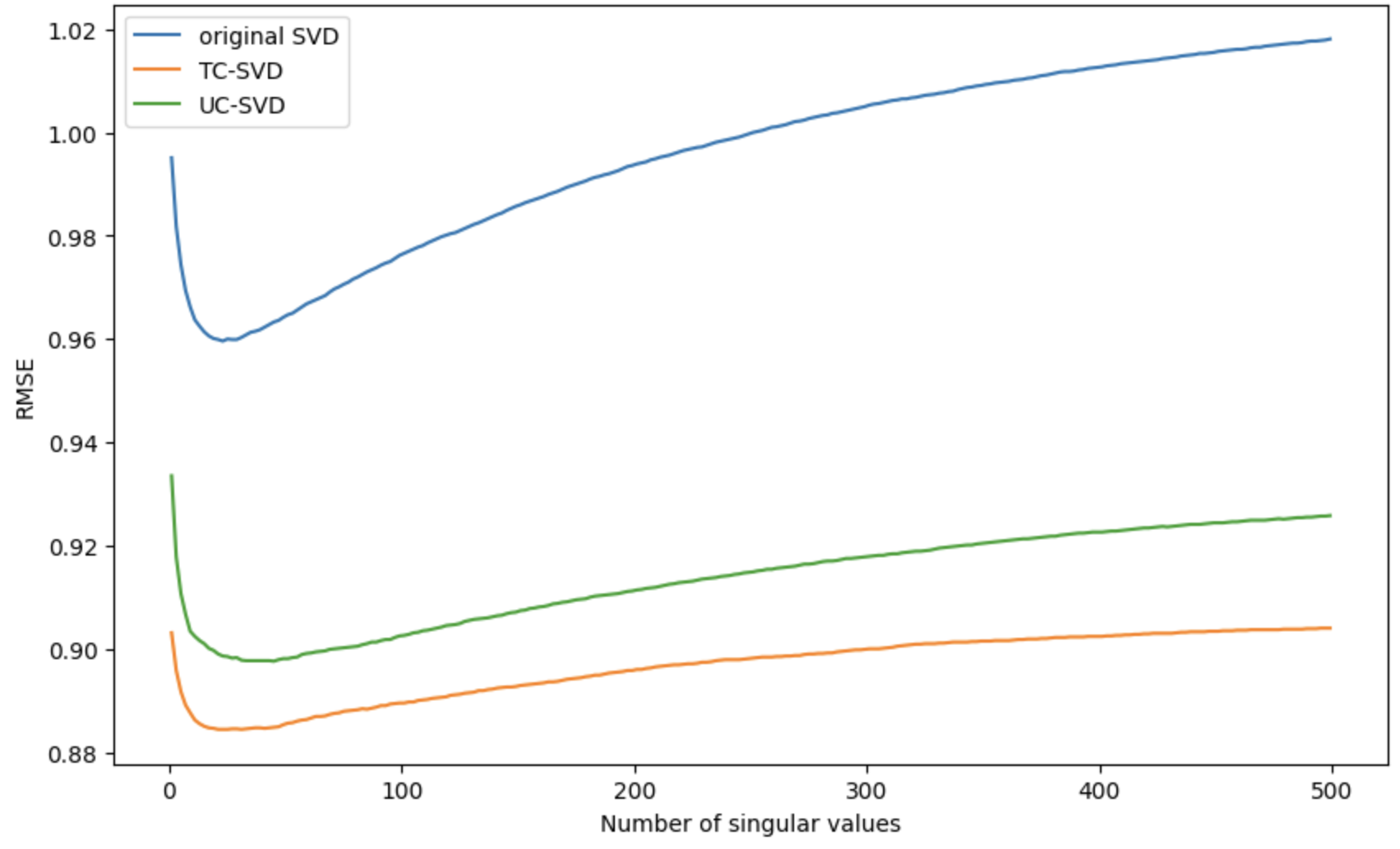}
\caption{This figure shows how the RMSE decreases with dataset size, where the horizontal label represents increasing fractions of the MovieLens 1M dataset. Here the RMSE of SC is $0.02$ smaller than UC.}
\label{1m-rmse}
\end{figure}

\begin{figure}
\includegraphics[width=4in, height=2.5in]{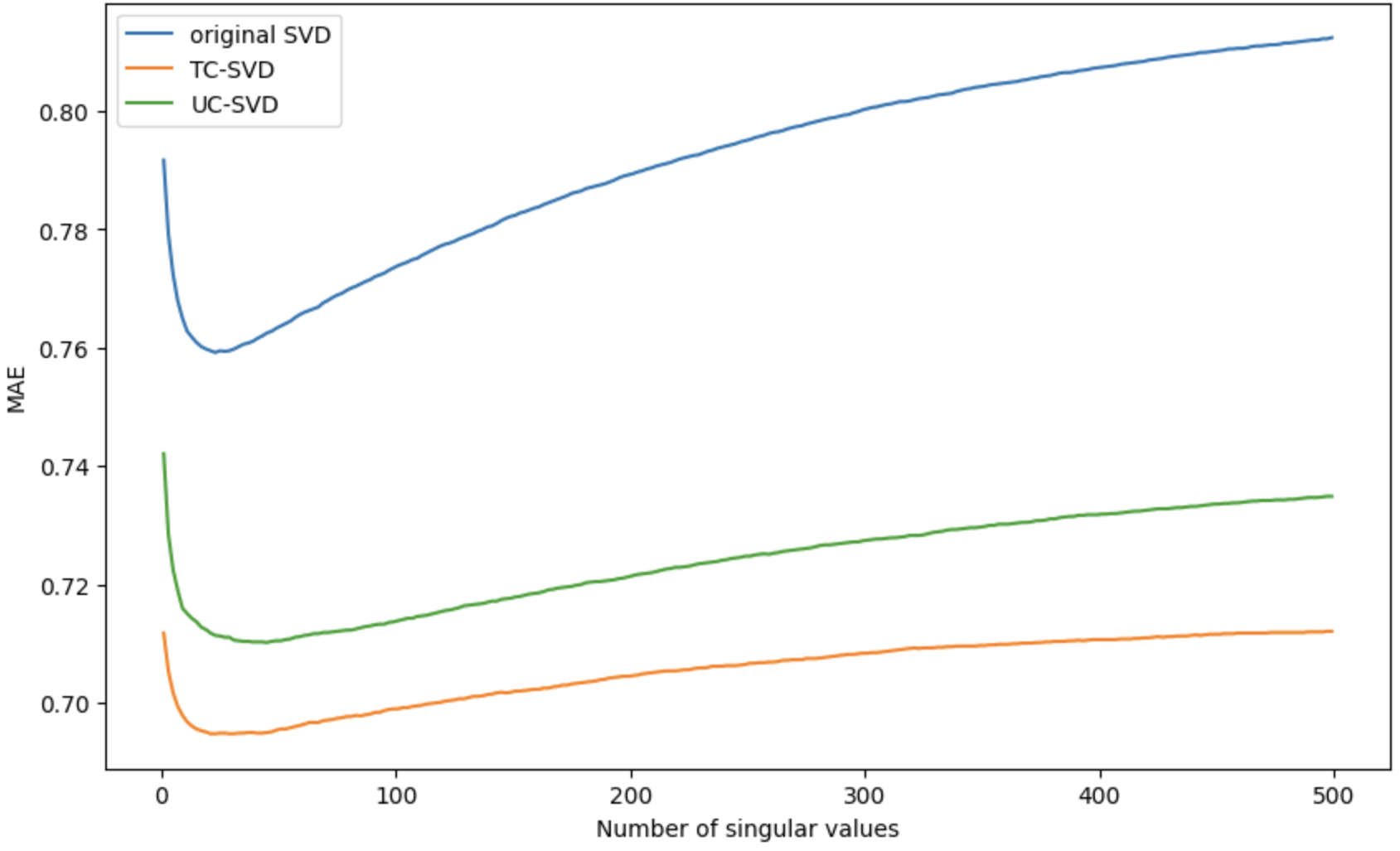}
\caption{This figure shows how the MAE decreases with dataset size, where the horizontal label represents increasing fractions of the MovieLens1M. Here the RMSE of SC is $0.03$ smaller than UC.}
\label{1m-mae}
\end{figure}

\begin{figure}
\includegraphics[width=4.5in, height=2.5in]{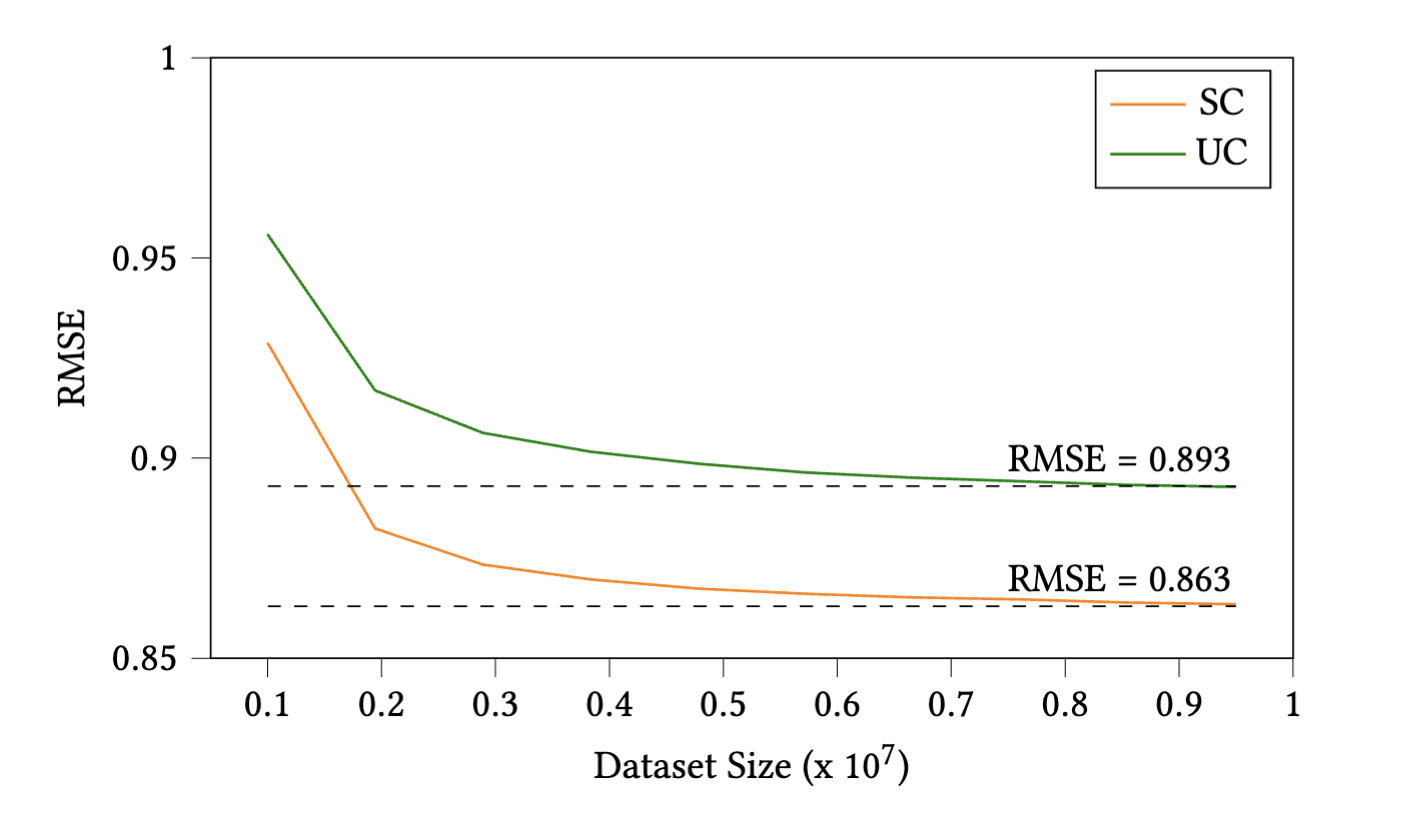}
\caption{This figure shows how the RMSE decreases with dataset size, where the horizontal label represents increasing fractions of the MovieLens 10M dataset. Here the RMSE of SC is $0.03$ smaller than UC.}
\label{10m-rmse}
\end{figure}

\begin{figure}
\includegraphics[width=4in, height=2.5in]{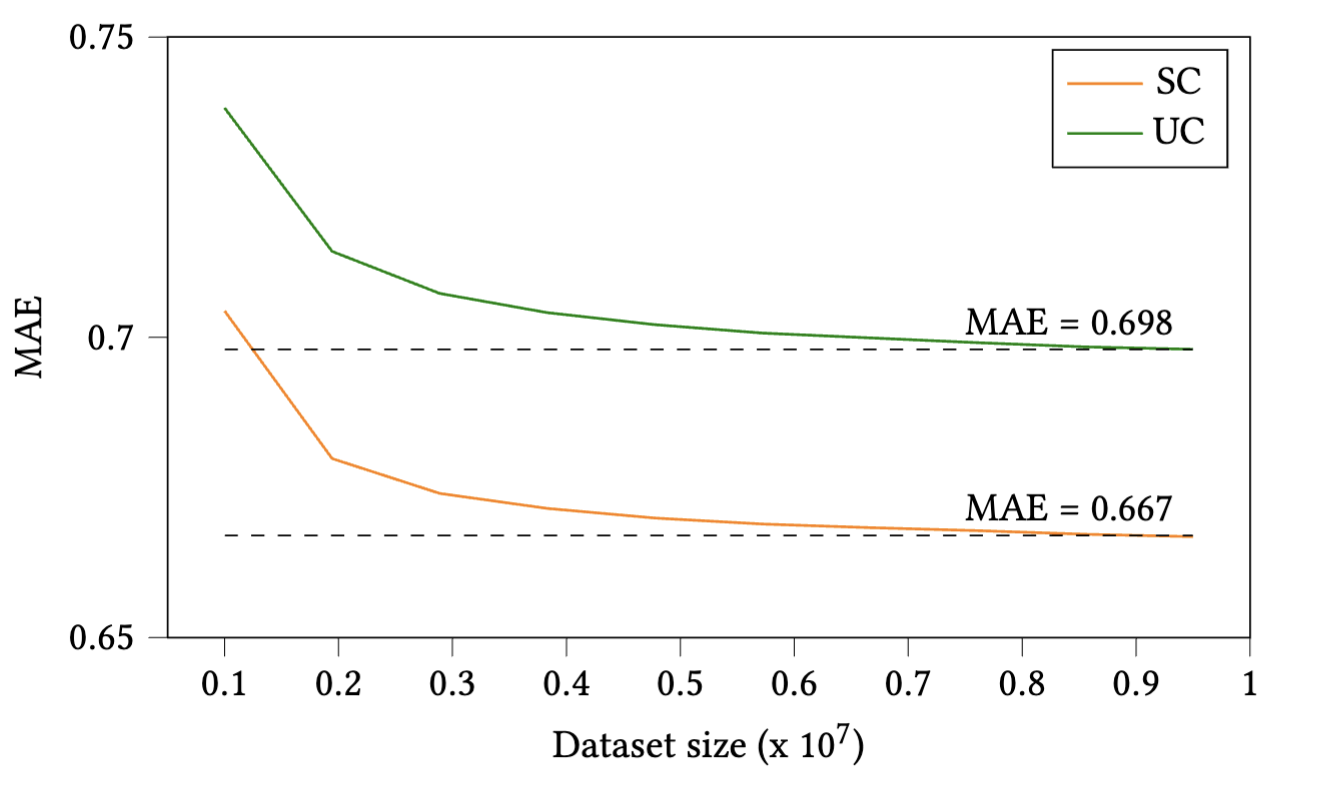}
\caption{This figure shows how the MAE decreases with dataset size, where the horizontal label represents increasing fractions of the MovieLens 10M dataset. Here the MAE of SC is $0.03$ smaller than UC.}
\label{10m-mae}
\end{figure}

\begin{figure}[t] \centering
\includegraphics[width=4in, height=2.5in]{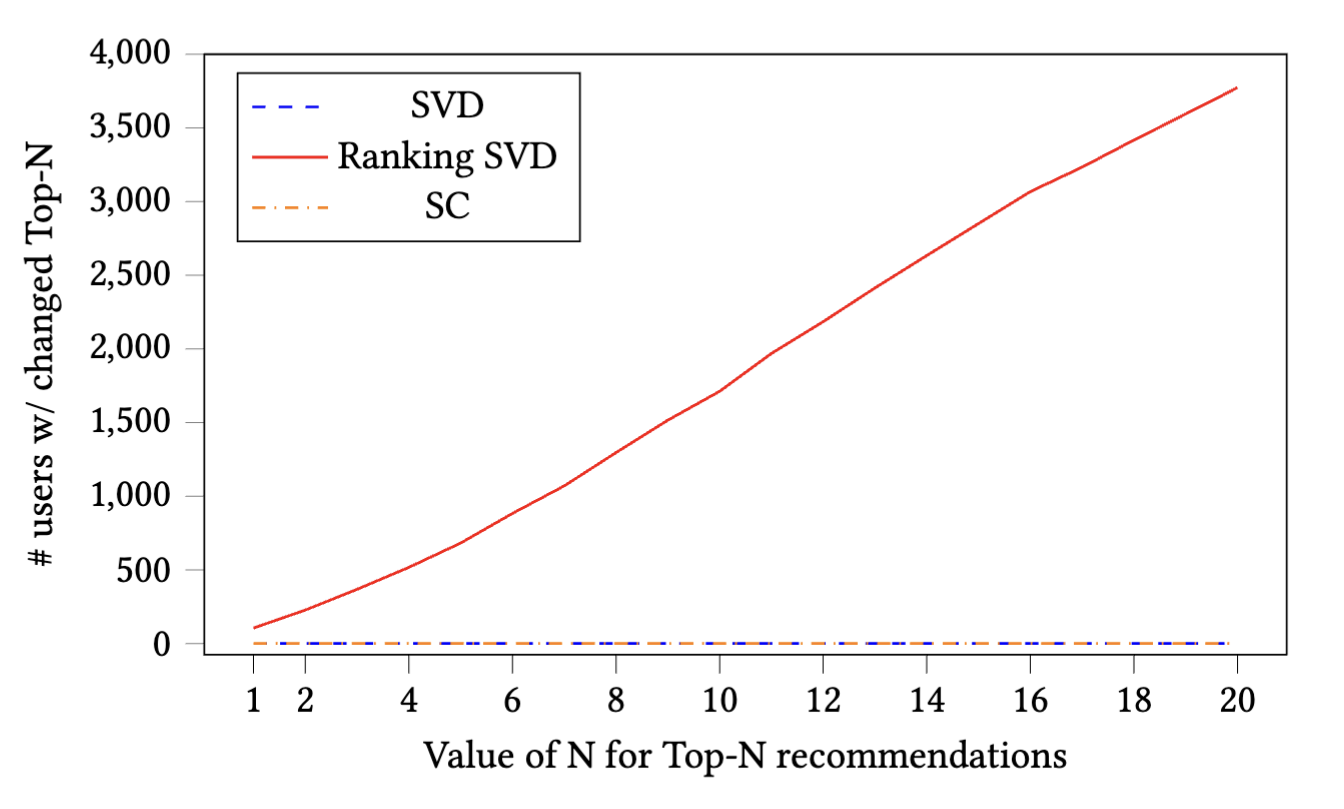}
    \caption{This figure shows the effect of adding 1 to all of the ratings of a single user in the MovieLens1M dataset so that their highest rating of 4 becomes the maximum rating value of 5. More specifically, we consider the extent to which such a shifting affects the rank ordering of recommendations made to other users. The horizontal axis represents the number of changes in the top-N, and the vertical axis represents the number of users whose top-N ranked recommendations from the system changes as a consequent of the single user shifting their ratings. For example, between 500 and 1000 users would see one or more changes to their top-10 recommendations from the system.}
\label{fair}
\end{figure}

Our final test is intended to provide empirical corroboration that the SC rank ordering of recommendations is invariant to a shift/translation of +1 applied to all of a particular user's ratings. As can be observed in Figure \ref{fair}, shifting of the user's ratings has no effect on the rank order of SC recommendations. By contrast, the rank ordering of recommendations from alternative SVD-based methods (``SVD'' \cite{firstSVD}  and ``Ranking SVD'' \cite{ranking-SVD}) exhibit dramatic changes, including even to their respective sets of Top N recommendations.  

Future work will examine and provide test results on the enforcement of UC and SC constraints on blackbox systems \cite{short}.

\section{Summary}

In this paper we have presented a shift-consistent alternative to unit consistency for constructing a provably admissible RS framework. We have argued that any formally admissible RS framework should be expected to produce {\em reasonable} results in the sense of recommendations that reflect rank-order preference information from user ratings. However, we have noted that admissible candidate RS frameworks exhibit different sensitivities to application-specific artifacts. In particular, empirical evidence was presented that shift consistency may provide greater robustness to extreme rating discretization compared to UC. 

A more significant contribution of this paper is a concrete example of a new provably admissible RS framework that complements the original UC framework. This demonstrates that multiple consistency criteria can serve as the basis for an admissible RS. This is important because the imposition of such consistency constraints on machine learning / AI systems may be necessary to render their performance properties amenable to rigorous analysis. Although potentially controversial, we argue that trust in an AI system must be based on natural and intuitive performance properties that should be expected to hold for any {\em reasonable} system, i.e., trust in the system is limited to what can be formally proven about it. Otherwise, it is a purely faith-based system whose trust derives solely from trust in its provider.

\appendix

\section{Uniqueness Proof:}
\label{appendix_uniqueness}
Here we restate the uniqueness theorem and we based the proof from \cite{acmrs}.
\begin{theorem} \textbf{(Uniqueness)}
The result from $\SCCA (A, k)$ is unique, even if there are different sets of shifting vectors ${\matrixsubtensor} \leftarrow \CSA(A, k)$.
\end{theorem}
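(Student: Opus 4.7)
The plan is to reduce the uniqueness of the completed tensor to a vanishing statement for ``null shifts,'' combining the earlier uniqueness theorem for $A'$ with an inclusion-exclusion identity over the full-support hypercube. By that earlier result, any two CSA outputs $S_k^{(1)}$ and $S_k^{(2)}$ differ by a vector $T_k := S_k^{(2)} - S_k^{(1)}$ satisfying $\sum_{i:\vec{\alpha} \in A_i} T_{k,i} = 0$ for every $\vec{\alpha} \in \upsigma(A)$. Since SCCA fills each unknown entry as $A'(\vec{\alpha}) = \sum_{i:\vec{\alpha} \in A_i} S_{k,i}$, the two completed tensors agree precisely when $f(\vec{\alpha}) := \sum_{i:\vec{\alpha} \in A_i} T_{k,i}$ vanishes on every $\vec{\alpha} \in \unknown{A}$ as well; establishing this is the whole content of the proof.

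To carry out the vanishing argument, I would first rewrite $f$ so that its dependence on coordinates is explicit. Each $k$-dimensional subtensor is indexed by a $(d-k)$-subset $J \subset [d]$ of fixed coordinates together with their values, so $f(\vec{\beta}) = \sum_{J:\,|J|=d-k} \tau_J(\vec{\beta}|_J)$, where $\tau_J(\cdot)$ depends only on the coordinates of $\vec{\beta}$ lying in $J$. Next, I would fix an arbitrary $\vec{\alpha} \in \unknown{A}$ and invoke full support to obtain a direction $\vec{s}$ such that $\vec{\alpha} + \vec{\delta}\cdot\vec{s} \in \upsigma(A)$ for every nonzero $\vec{\delta} \in \{0,1\}^d$, and form the alternating sum
\[
\Sigma \;:=\; \sum_{\vec{\delta} \in \{0,1\}^d} (-1)^{|\vec{\delta}|}\, f\bigl(\vec{\alpha} + \vec{\delta}\cdot\vec{s}\bigr).
\]
On one hand, every non-origin vertex lies in $\upsigma(A)$, so all $2^d - 1$ such terms vanish and $\Sigma = f(\vec{\alpha})$. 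On the other hand, swapping the order of summation and factoring the $\vec{\delta}$-sum over coordinates inside versus outside $J$ shows that the inner sum over the $k$ free coordinates in $J^c$ collapses to $\prod_{j \in J^c} \sum_{\mu \in \{0,1\}} (-1)^{\mu} = 0^k = 0$, since $k \geq 1$. Hence $\Sigma = 0$, forcing $f(\vec{\alpha}) = 0$, as required.

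The main obstacle is the bookkeeping in the inclusion-exclusion step: for each subtensor index $J$ one must cleanly distinguish the $d-k$ coordinates on which $\tau_J$ actually depends from the $k$ coordinates on which it does not, and then verify that the latter contribute the vanishing factor $0^k$ precisely in the regime $1 \leq k \leq d-1$ in which SCCA is defined. The remaining pieces --- recognizing that two SCCA imputations differ only by the null-shift quantity $f$, extracting the hypercube of known neighbors from full support, and invoking the earlier uniqueness theorem for $A'$ --- are essentially mechanical once this inclusion-exclusion identity is in hand.
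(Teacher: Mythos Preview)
Your argument is correct and follows essentially the same strategy as the paper: reduce to showing that any null shift $T_k$ satisfies $\sum_{i:\vec{\alpha}\in A_i} T_{k,i}=0$ at each unknown $\vec{\alpha}$, then exploit the full-support hypercube $H(\vec{\alpha},\vec{s})$ via a parity/alternating-sum identity. Your packaging---a single alternating sum $\sum_{\vec{\delta}\in\{0,1\}^d}(-1)^{|\vec{\delta}|}f(\vec{\alpha}+\vec{\delta}\cdot\vec{s})$ together with the factorization $\prod_{j\in J^c}\bigl(1+(-1)\bigr)=0$---is more streamlined than the paper's version, which splits the $2^d-2$ non-extremal hypercube vertices into parity classes $G_0,G_1$, counts $|G_m\cap A_i|$ by binomial identities, and handles the endpoint cases $k'\in\{0,k\}$ separately, but the underlying combinatorics is identical.
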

To facilitate the subsequent theorem and proof, we introduce the following definition.
\begin{definition} For $\pi \subseteq [d]$ and $\Vec{\alpha} \in \mathbb{Z}^{d}$, a vector $\Vec{\alpha}_{\pi} \in \Z^{k}_{>0}$ for $k\leq d$ is considered a $k$-dimensional subvector of $\Vec{\alpha}$ if $\Vec{\alpha}_{\pi} = (\alpha_{\pi_1}, \dots, \alpha_{\pi_k})$. We define the set of those $k$-dimensional subvectors as $V_k(\vec{\alpha})$. 
\label{uniqueness_def}
\end{definition}
Using this definition, we obtain the following proof regarding uniqueness of the recommendation/entry-completion result.
\begin{proof}
For $A' = SCCA(A, k)$, since entry $\vec{\alpha} \in \known{A}$ has $A'(\vec{\alpha} ) = A(\vec{\alpha} )$
by Theorem \ref{uniquess_A'}, we need only prove uniqueness of any completion $\vec{\alpha}\in \unknown{A}$.
From the uniqueness result of Theorem \ref{uniquess_A'}, $\SCCA(A, k)$ admits two distinct shifting vectors $\coeffsubtensorScaled_k$ and $\coeffsubtensorScaled_k'$ that yield the same, unique, $\SCCA(A, k)$. From definition \ref{fully-supported-tensor}, there exists $2^d-1$ vector $\vec{\alpha}' \in H(\vec{\alpha}, \vec{s})$ for some fixed vector $\vec{s}$. 
From Theorem \ref{uniquess_A'}, the shifting vector $\coeffsubtensorScaled_k'$ equals $\coeffsubtensorScaled_k + \coeffsubtensorAdd_k$ is equivalent to
\begin{equation}
        \sum\limits_{i:\vec{\alpha}' \in A_i}\hspace{-4pt} \coeffsubtensorAdd_{k, i}  ~= ~0 \qquad \forall \vec{\alpha}'\in H(\Vec{\alpha}, \Vec{s}).
\end{equation}
We now show that 
\begin{equation}
        A'(\vec{\alpha}) = \sum_{i:\vec{\alpha} \in A_i}\coeffsubtensorScaled_{k,i} = \sum_{i:\vec{\alpha} \in A_i} \coeffsubtensorScaled'_{k,i}
\end{equation}
or equivalently from Theorem \ref{uniquess_A'}
\begin{equation}
     \sum\limits_{i:\vec{\alpha} \in A_i}\hspace{-4pt} \coeffsubtensorAdd_{k, i}  ~= ~0 ~.
\end{equation}
Without loss of generality, we consider the case $d \equiv 0$ (mod $2)$, and the other case can be proven similarly. We define two sets $G_0$ and $G_1$ by the following. Except for $\vec{\alpha}' = \vec{\alpha} + \vec{s}$, we divide $2^d-2$ remaining vectors $\vec{\alpha}'$ into two groups. Then for $m \in \{0, 1\}$, $G_m$ is the set of $\vec{\alpha}' = \vec{\alpha} + \vec{\delta}\cdot \vec{s}$ such that the number of $\delta_i = 0$ equals $m$ modulo 2. We also denote $G_m \cap A_i= \{\vec{\alpha}' \in G_m \mid \vec{\alpha}' \in A_i\}$. Then
\begin{equation}
    \sum_{\vec{\alpha}' \in G_m}~\sum\limits_{i:\vec{\alpha}' \in A_i}\hspace{-8pt} \coeffsubtensorAdd_{k, i} ~~ = ~ 0 ~~\Leftrightarrow ~ \sum_{i}\hspace{-2pt} \coeffsubtensorAdd_{k, i}^{|G_m \cap A_i|} ~~=~ 0 ~.
\end{equation}
By definition \ref{uniqueness_def}, we consider $\vec{v} \in \bigcup_{\Vec{\alpha}' \in H(\vec{\alpha}, \vec{s})} V_{k}(\Vec{\alpha}')$ and a permutation $\pi' \subseteq [k]$ as in definition \ref{subtensors_def} such that $0 \leq k' \leq k$. Then the complement permutation $\pi'^C \in [k] - \pi'$ gives $\Vec{v}'_{\pi'} = \Vec{\alpha}_{\pi'}$ and $\Vec{v}'_{\pi'^C} = (\vec{\alpha} + \vec{s})_{\pi'^C}$. Consider case 1 when $0 < k' < k$, we form a fixed $\vec{\alpha}'$ from $\Vec{v}$ by a new permutation $\pi$ such that $\pi \subseteq [d]$ and $\pi' \subseteq \pi$ by a difference of $l$ elements, making $\pi$ has $k' + l \leq d$ elements. Then the complement permutation $\pi^C \in [d] - \pi$ gives $\Vec{\alpha}'_{\pi} = \Vec{\alpha}_{\pi}$ and $\Vec{\alpha}'_{\pi^C} = (\vec{\alpha} + \vec{s})_{\pi^C}$. 
\\
For $m\in \{0,1\}$, if the number of elements of $\pi$ as $k' + l \equiv m$ (mod $2)$, then $\vec{v}$ forms $\binom{d-k}{l}$ numbers of $\vec{\alpha}'$ that belongs to $G_m$. For subtensor $A_i$ and with the sum in between $0 \leq l \leq d-k$, $|G_m \cap A_i|$ equals $ \sum_{k' + l \equiv m (\text{mod } 2)} \binom{d-k}{l} = 2^{d-k-1}$ for any $m \in \{0, 1\}$. Thus,
\begin{equation} 
   \coeffsubtensorAdd_{k, i}^{|G_1 \cap A_i|}-\coeffsubtensorAdd_{k, i}^{|G_0 \cap A_i|} ~=~ 0 ~.
\end{equation}
If $k' = k$, we encounter the vector $\vec{\alpha}$ when forming $\vec{\alpha}'$. If $k' = 0$, we encounter the vector $\vec{\alpha} + \vec{s}$ when forming $\vec{\alpha}'$. Since we omit $\vec{\alpha}$ and $\vec{\alpha}+\vec{s}$ from $G_0$, $|G_0 \cap A_i| = 2^{d-k-1} - 1$ and $|G_1 \cap A_i| = 2^{d-k-1}$ in either case of $k'$. Thus,
\begin{equation} 
    \coeffsubtensorAdd_{k, i}^{|G_1 \cap A_i|} - \coeffsubtensorAdd_{k, i}^{|G_0 \cap A_i|} ~=~ \coeffsubtensorAdd_{k, i} 
\end{equation}
and therefore
\begin{equation}
    \sum_{i} \coeffsubtensorAdd_{k, i}^{|G_1 \cap A_i|} - \sum_{i} \coeffsubtensorAdd_{k, i}^{|G_0 \cap A_i|} = ~0 ~\Rightarrow \sum\limits_{i:\vec{\alpha} \in A_i}\hspace{-4pt} \coeffsubtensorAdd_{k, i} + \sum\limits_{i:\vec{\alpha}' \in A_i}\hspace{-4pt} \coeffsubtensorAdd_{k, i}\,=  \sum\limits_{i:\vec{\alpha} \in A_i}\hspace{-4pt} \coeffsubtensorAdd_{k, i} ~= 0 ~.
\end{equation}
This equality implies that $A'$ is unchanged, and thus uniquely determined.

\end{proof}

\section{Consesus Ordering Proof}
\label{consensus-proof}
Here we restate the theorem and we based the proof from \cite{acmrs}.
\begin{theorem} Consensus Ordering: 
     Given a tensor $A$ and obtained result $A' = SCCA(A, d-1)$, and permutation vector $\gamma$, $\known{\gamma} \neq \emptyset$, then any completion vector $\vec{\alpha} \in \unknown{\gamma}$ must satisfy $ A'^{(\gamma_a)}(\vec{\alpha}) < A'^{(\gamma_b)}(\vec{\alpha}) \text{ when } a < b.$
\end{theorem}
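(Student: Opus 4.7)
The plan is to exploit the structure that arises when $k = d-1$: the $(d-1)$-dimensional subtensors of $A$ are precisely the slice subtensors obtained by fixing a single coordinate, so every entry of $A$ lies in exactly $d$ such subtensors, one per coordinate axis. I will first use a cancellation argument to reduce the target strict inequality to a comparison between the two CSA shifts assigned to the slices $A^{(\gamma_a)}$ and $A^{(\gamma_b)}$, and then use the CSP zero-sum condition on those two slices, together with the unanimity built into Definition \ref{unanimous}, to establish that comparison.

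For the cancellation step, let $S_{j,v}$ denote the CSA shift associated with the $(d-1)$-dimensional subtensor obtained by fixing coordinate $j\in[d]$ at value $v$. Since Definition \ref{unanimous} yields $\unknown{\gamma} = \unknown{A^{(\gamma_c)}}$ for every $c$, both $(\vec{\alpha},\gamma_a)$ and $(\vec{\alpha},\gamma_b)$ lie in $\unknown{A}$, and Algorithm \ref{alg:scca} therefore assigns
\begin{equation}
   A'^{(\gamma_c)}(\vec{\alpha}) \;=\; A'(\vec{\alpha}, \gamma_c) \;=\; \sum_{j=1}^{d-1} S_{j, \alpha_j} \;+\; S_{d,\gamma_c}, \qquad c \in \{a,b\}.
\end{equation}
The first $d-1$ summands do not depend on $c$ and cancel upon subtraction, so the theorem reduces to the scalar inequality $S_{d,\gamma_a} < S_{d,\gamma_b}$.

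For the comparison step, I will apply the CSP defining identity (the sum of known entries of each shifted slice vanishes) to the two slices $A^{(\gamma_a)}$ and $A^{(\gamma_b)}$, with the shifted values expanded via Definition \ref{shiftingbyHadamard}. Using the hypothesis $\known{A^{(\gamma_a)}} = \known{A^{(\gamma_b)}} = \known{\gamma}$, both instances reduce to
\begin{equation}
   \sum_{\vec{\beta}\in \known{\gamma}} A(\vec{\beta},\gamma_c) \;=\; |\known{\gamma}|\, S_{d,\gamma_c} \;+\; \sum_{\vec{\beta}\in\known{\gamma}}\, \sum_{j=1}^{d-1} S_{j,\beta_j},\qquad c\in\{a,b\}.
\end{equation}
Subtracting the $c=a$ instance from the $c=b$ instance eliminates the inner double sum and yields $|\known{\gamma}|\,(S_{d,\gamma_b}-S_{d,\gamma_a}) = \sum_{\vec{\beta}\in\known{\gamma}}\bigl(A(\vec{\beta},\gamma_b)-A(\vec{\beta},\gamma_a)\bigr)$. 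By Definition \ref{unanimous} every summand on the right is strictly positive and $\known{\gamma}\neq\emptyset$ is assumed by hypothesis, so $S_{d,\gamma_b} > S_{d,\gamma_a}$, which finishes the argument.

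The main conceptual obstacle is the reduction in the first step: recognizing that when $k=d-1$ the subtensor-incidence structure is rigid enough that all shifts indexed by the first $d-1$ coordinates appear identically at the two query positions and therefore cancel, leaving only the last-coordinate slice shifts to compare. A secondary technical point is that the CSA shifting vector is not itself unique (Theorem \ref{uniquess_A'}), but by Theorem \ref{uniqueness_proof} the completion $A'$ is unique, so the scalar difference $S_{d,\gamma_b}-S_{d,\gamma_a}$ produced above is invariant under the choice of valid CSA output and the resulting inequality is well-posed.
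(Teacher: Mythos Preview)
Your proposal is correct and follows essentially the same route as the paper's proof: both write the SCCA completion at $(\vec{\alpha},\gamma_c)$ as a sum of per-axis slice shifts, cancel the first $d-1$ shifts to reduce the claim to $S_{d,\gamma_a}<S_{d,\gamma_b}$, and then obtain that inequality by summing the CSA zero-sum identity over $\known{\gamma}$ for each slice and invoking the strict unanimity in Definition~\ref{unanimous}. The only cosmetic differences are that the paper carries the full chain $\gamma_1<\cdots<\gamma_D$ rather than a single pair $a<b$, and phrases the summation step as ``substitute (\ref{multilinear_SCCA}) and (\ref{normalized_proof}) into (\ref{ordering})'' instead of your explicit subtraction; your handling of the non-uniqueness of $S_k$ via Theorem~\ref{uniqueness_proof} also mirrors the paper's appeal to that result.
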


\begin{proof}
For any vector $\vec{\alpha} \in \known{\gamma}$, the ordering condition from definition $\ref{unanimous}$ gives:
\begin{equation}
    A^{(\gamma_1)}(\Vec{\alpha}) < \dots < A^{(\gamma_i)}(\Vec{\alpha}) < \dots < A^{(\gamma_D)}(\Vec{\alpha})
\label{ordering}
\end{equation}

Given that $A'$ from $\SCCA$ is normalized via $\CSA$, then without loss of generality we can omit the dimension $k$ when using notation (4) from section 2, since $k$ is specified as $k = d-1$. Then the formulaic relationship between $A$ and $A'$ is analogous as follow. Assuming that $S = \left(S_{\{1\}},\cdots S_{\{d\}}\right)$ as $S_{\pi}$ is the shifting by permutation $\pi \in \{\{1\}, \cdots, \{d\}\}$ and we limit $S_{\{d\}} = (\coeffsubtensorScaled_{\gamma_i})_{1\leq i\leq D}$, we obtain the following after the SCCA process
\begin{equation}
\begin{aligned}
      A^{(\gamma_i)}(\Vec{\alpha}) =  \coeffsubtensorScaled_{\gamma_i} + \sum_{j=1}^{d-1} \coeffsubtensorScaled_{ \{j\}, \alpha_j} +  A'^{(\gamma_i)}(\Vec{\alpha}).
\end{aligned}  
\label{multilinear_SCCA}
\end{equation}
Using the CSA result with respect to $(d-1)$-dimensional subtensor in direction $\gamma_i$:
\begin{equation}
    \sum_{\vec{\alpha}\in\known{\gamma}}\hspace{-2pt} A'^{(\gamma_i)}(\Vec{\alpha}) ~=~ 0~.
    \label{normalized_proof}
\end{equation}
Substituting (\ref{multilinear_SCCA}) and (\ref{normalized_proof}) into (\ref{ordering}) and using the fact that $\known{A^{(\gamma_i)}} = \known{\gamma}$ for all $1 \leq i \leq D$ gives

\begin{equation}
   \coeffsubtensorScaled_{\gamma_1} < \dots < \coeffsubtensorScaled_{\gamma_i} < \dots <  \coeffsubtensorScaled_{\gamma_{D}}.
    \label{2.2.11}
\end{equation}
For any vector $\vec{\alpha}' \in\unknown{\gamma}$, the following entry is uniquely determined from theorem \ref{uniqueness_proof},
\begin{equation}
    A'^{(\gamma_i)}(\Vec{\alpha}') ~=~ 
    \coeffsubtensorScaled_{\gamma_i} + \sum_{j=1}^{d-1} \coeffsubtensorScaled_{\{j\}, \alpha'_j},
\end{equation}
and we therefore deduce that $
A'^{(\gamma_1)}(\Vec{\alpha}')
< 
\dots < A'^{(\gamma_i)}(\Vec{\alpha}')  < \dots < A'^{(\gamma_D)}(\Vec{\alpha}').$
\end{proof}

\end{document}